\documentclass[11pt]{article}

\usepackage[margin=1.25in]{geometry}
\usepackage[T1]{fontenc}
\usepackage[utf8]{inputenc}
\usepackage{lmodern}
\usepackage{amsmath,amssymb,amsthm,mathtools}
\usepackage{bm}
\usepackage{algorithm}
\usepackage{algpseudocode}
\usepackage{tikz}
\usepackage{graphicx}
\usepackage{wrapfig}
\usepackage{subcaption}
\usepackage{booktabs}
\usepackage{caption}
\usepackage{enumitem}
\usepackage{comment}
\usepackage{etoolbox}
\usepackage{color,soul}
\usepackage[authoryear,round]{natbib}
\usepackage{hyperref}
\usepackage{cleveref}

\newtheorem{theorem}{Theorem}
\newtheorem{lemma}{Lemma}

\theoremstyle{definition}

\newtheorem{assumption}{Assumption}
\theoremstyle{remark}
\newtheorem{remark}{Remark}

\newcommand{\txth}{\tilde{x}_\theta}

\makeatletter
\AtBeginEnvironment{equation}{%
  \setlength{\abovedisplayskip}{4pt}%
  \setlength{\belowdisplayskip}{4pt}%
}
\AtBeginEnvironment{align}{%
  \setlength{\abovedisplayskip}{4pt}%
  \setlength{\belowdisplayskip}{4pt}%
}
\AtBeginEnvironment{align*}{%
  \setlength{\abovedisplayskip}{4pt}%
  \setlength{\belowdisplayskip}{4pt}%
}
\makeatother

\title{Diff2SP: Diffusion Models for Correlated Scenario Generation in Stochastic Programming}

\author{
  Haixiang Sun\\
  Purdue University\\
  \texttt{sun1321@purdue.edu}
  \and
  Andrew L. Liu\\
  Purdue University\\
  \texttt{andrewliu@purdue.edu}
}
\begin{document}

\maketitle

\begin{abstract}
Scenario generation is a critical component in stochastic programming (SP), as it directly influences the quality of decision-making under uncertainty. Existing approaches predominantly rely on either sampling-based techniques or supervised learning using neural networks. Sampling-based techniques often struggle to capture complex dependencies and rare but plausible events, while supervised learning requires fixed input-output pairs for training and is limited in its ability to generate a wide variety of realistic scenarios that are not restricted by predefined patterns or rules.  To address these limitations, we introduce Diff2SP, a diffusion-based generative framework that incorporates downstream optimization objectives directly into scenario generation. Unlike conventional methods that treat scenario generation and decision-making as separate steps, Diff2SP embeds stochastic optimization into the training process, enabling the generation of scenarios that are both statistically coherent and decision-aware. To formally justify this optimization-aware design, we establish a regret bounds that link distributional accuracy to decision quality, and establish sample complexity guarantees showing faster convergence than traditional generative models such as GANs. Empirical results on both synthetic and power-system datasets validate these theoretical insights, demonstrating that Diff2SP consistently improves both statistical fidelity and downstream optimization outcomes.
\end{abstract}

\noindent\textbf{Keywords:} Stochastic programming, Diffusion Models, Scenario Generation, Uncertainty, Optimization


\section{Introduction}
\label{sec:intro}

Decision-making under uncertainty is a fundamental challenge across a wide range of applications, particularly in modern energy systems. As electric grids evolve through increasing renewable penetration, electrification, distributed energy resources, and rapidly growing electricity demand from AI infrastructure and data centers, system operators and planners must make decisions under increasingly complex and correlated sources of uncertainty.
Stochastic programming (SP) provides a principled approach to tackling such challenges by incorporating probabilistic models into optimization. In solving SP problems, scenario generation is a critical step~\citep{Mulvey1995, Ruszczynski2003, Hoyland2001, Pflug2012,Chopra2019ScenarioGI}, where a finite set of representative scenarios is constructed to approximate the underlying uncertainty. The quality of these scenarios directly influences the robustness and optimality of decision making~\citep{021814225969, Kim2015AGT}. 

Scenario generation is a long-established field with a broad and effective toolkit, and many existing methods represent correlated, non-Gaussian uncertainty well. The persistent difficulty is one of trade-offs among fidelity, diversity, and computational cost, compounded by the fact that scenarios are typically generated to match the data distribution rather than to serve the downstream decision they inform. Classical sampling-based methods, such as Monte Carlo and sample-average approximation~\citep{article12,Homem2014}, are asymptotically consistent, but they can require a large number of scenarios to represent the uncertainty accurately, particularly the rare, high-cost tail events that dominate risk-sensitive decisions, which makes them computationally demanding.

Scenario-reduction techniques~\citep{Heitsch2009, Pflug2012} reduce the size of a given scenario set under a probability metric, but their quality depends on the initial scenario pool. In practice, that pool is often generated from empirical or parametric models that impose stationarity, Gaussianity, or independence assumptions. Moment-matching schemes face a related limitation: by matching only finitely many low-order moments, they may leave higher-order dependence and multimodality unmodeled. A separate line of work trains neural networks to approximate solution-relevant quantities or predict high-quality solutions directly from problem parameters~\citep{patel2022neur2sp, dumouchelle2023neur2ro, wu2024hgcnsp}. These amortized predictors can be fast at inference, but they typically require training labels obtained by repeatedly solving optimization instances, learn a problem- and distribution-specific parameter-to-decision or parameter-to-value map, and do not produce a generative model of the uncertainty itself. As a result, they are not designed to generate diverse scenarios, and their performance under distribution shift depends on how well the learned surrogate generalizes beyond the training distribution.

To remove this dependence on labels and instead learn the uncertainty distribution directly, recent work has turned to unsupervised generative models.
Generative adversarial networks (GANs) were among the first applied to scenario generation~\citep{chen2018model, 8362314, DONG2022118387}, but their implicit, adversarial formulation is poorly matched to SP for two concrete reasons. First, GANs provide no explicit likelihood or density estimate, making it difficult to assess whether the generated scenarios faithfully represent tail events and cross-variable dependence. Second, adversarial training is often unstable: the min--max objective can lead to non-convergence, mode collapse, or vanishing gradients when the discriminator separates generated and real samples~\citep{Goodfellow2014, arjovsky2017wasserstein, salimans2016improved, Mescheder2018}.
Other generative models avoid adversarial training, but they introduce different structural restrictions. Variational autoencoders~\citep{Kingma2013} optimize an evidence lower bound under amortized Gaussian posteriors, which can over-smooth sharp tails and fine cross-variable dependence. Normalizing flows~\citep{Papamakarios2019, cramer2021principal} provide exact likelihoods, but require invertible, dimension-preserving maps with tractable Jacobians, forcing coupling- or autoregressive architectures that constrain how correlated, structured multivariate uncertainty can be represented.

Diffusion models~\citep{ho2020denoising,song2020denoising} offer a different set of trade-offs that are well suited to decision-aware scenario generation, where the generator must capture rare but decision-relevant events while remaining stable and differentiable enough to support optimization-guided training.
 By decomposing generation into many simple
Gaussian denoising steps, they replace the adversarial game with a single, score-matching objective and attain broad mode coverage without a
discriminator, making them less prone to dropping rare events. Just as important for our purposes, the generation process is differentiable end to end, allowing a decision-dependent training signal to be propagated through the sampler. We emphasize that diffusion is not a universal remedy: sampling is iterative and therefore more expensive at inference than a single forward pass, and sample quality depends on accurate training of the denoiser. Indeed, our own analysis (Section~\ref{sec:Complexity}) shows that the regret guarantees we establish do not require diffusion \emph{per se} -- any sufficiently regularized generator trained with a decision-aware objective could in principle attain them. What distinguishes diffusion is practical: its stable training objective and differentiable sampling process make it possible to incorporate a decision-dependent loss, typically nonconvex and high-variance, without the instabilities that make this unreliable for adversarial models. 

Motivated by this alignment rather than by the appeal of diffusion alone, we propose Diff2SP, a diffusion-based framework for scenario generation in stochastic programming. Diff2SP combines conditional diffusion models with transformer-based architectures to learn structured uncertainty distributions directly from data. By incorporating self-attention within the denoising process, the framework captures temporal dependencies and cross-variable correlations, while conditional generation enables adaptation to problem-specific contexts.

A distinguishing feature of Diff2SP is that the downstream stochastic optimization problem is incorporated into training through a differentiable optimization layer. Rather than learning scenarios solely to match the historical data distribution, the framework uses decision-aware feedback to guide scenario generation toward distributions that remain statistically representative while being relevant to the optimization task.

This paper makes three contributions. First, we develop a diffusion-based framework that embeds a stochastic optimization problem as a differentiable layer within scenario generation, thereby coupling uncertainty modeling with downstream decision making. Second, we establish regret guarantees that quantify how errors in the learned uncertainty distribution translate into losses in decision quality. Third, we derive sample complexity bounds for optimization-aware scenario generation and compare them with GAN-based learning, showing that directly incorporating decision objectives yields stronger theoretical guarantees. Extensive numerical experiments validate these theoretical insights and demonstrate improved decision performance in both synthetic and real-world settings.

The remainder of this paper is organized as follows. Section~\ref{sec_related works} reviews related work on scenario generation and optimization-integrated learning. Section~\ref{sec_diff2sp} introduces the proposed Diff2SP framework, detailing the diffusion-based architecture and optimization-aware training strategy. Section~\ref{sec:Complexity} provides theoretical analysis on sample efficiency and decision regret. Section~\ref{sec_experiment} presents empirical evaluations, including synthetic and real-world experiments, as well as ablation studies. Finally, Section~\ref{sec_conclusion} concludes the paper and outlines potential future directions.

\section{Related Works}
\label{sec_related works}
\textit{Optimization-guided and decision-focused learning.}
A growing body of work embeds constrained optimization problems directly into neural networks as differentiable layers~\citep{kotary2023predict}, with constructions for linear and integer programs~\citep{ferber2020mipaal, mandi2020interior, paulus2021comboptnet}, quadratic programs~\citep{amos2017optnet, bambade2024leveraging, pan2024bpqp}, cone programs~\citep{cvxpylayers2019}, and general convex programs~\citep{agrawal2019differentiating,blondel2022efficient, sun2022alternating, pineda2022theseus}. Gradients are propagated through the optimal solution via implicit differentiation of the optimality conditions~\citep{lorraine2020optimizing, blondel2022efficient}, enabling end-to-end pipelines that train a predictive model against a downstream decision~\citep{donti2017task,wang2019satnet}. This machinery underpins decision-focused learning, in which models are trained with decision regret rather than predictive loss so as to optimize decision quality directly~\citep{elmachtoub2022smart, mandi2024decision}, with decision-focused generative learning~\citep{wang2025gen} the most closely related to our setting. These approaches share our goal of aligning learning with downstream decisions, but they predominantly treat the uncertain parameters as point predictions or low-dimensional inputs and do not model the structured uncertainty -- temporal and cross-variable correlation, multimodality, and domain-specific constraints -- that characterizes real SP instances such as joint wind, solar, and electricity demand trajectories in energy applications. The decision-focused machinery alone therefore does not supply a generative model capable of producing diverse,
correlated scenarios.

\noindent\textit{Scenario generation.}
The scenario generation literature studies how to construct representative realizations of uncertain parameters for use in SP. As discussed in Section~\ref{sec:intro}, this literature includes classical sampling, moment-matching, and scenario-reduction methods, as well as more recent deep generative approaches based on GANs, VAEs, normalizing flows, and diffusion models. These methods differ substantially in their modeling assumptions, sampling procedures, and training objectives, but they generally separate uncertainty modeling from downstream optimization: scenarios are generated first and then supplied to the stochastic program. This separation complements, but does not resolve, the limitation of decision-focused learning discussed above. Decision-focused learning optimizes for downstream decision quality without a generative model of structured uncertainty, while scenario generation models uncertainty without explicitly incorporating the downstream decision. Diff2SP bridges this gap by embedding the SP problem as a differentiable optimization layer within the training of a conditional diffusion model, so that scenario generation is guided by both statistical fidelity and decision quality.

\section{Diff2SP -- The General Framework}
\label{sec_diff2sp}
Building on the perspective established above, we now present the Diff2SP framework. We first formalize the underlying stochastic optimization problem, then describe the diffusion-based scenario generator and the optimization-guided training procedure that links generated scenarios to downstream decision quality.
As a starting point, we consider the following stochastic optimization problem, hereafter denoted as \( SP(\xi) \):
\begin{equation}
\label{eq:SP}
F^*_{\xi}:=\min_{x \in \mathcal{X}} \ \mathbb{E}_{\xi} [f(x, \xi)],
\end{equation}
where $x \in \mathcal{X} \subset \mathbb{R}^n$ denotes the deterministic decision variables, $\mathcal{X}$ is the feasible set, $\xi \in \Xi$ is a random vector capturing the uncertainty in the problem parameters, and $F^*_{\xi}$ denotes the optimal objective function value corresponding to the random vector $\xi$. In general, the feasible set $\mathcal{X}$ may itself depend on the realization of the uncertainty $\xi$, in which case we write $\mathcal{X}(\xi)$ to emphasize this dependence.

To introduce the Diff2SP framework, we first describe the core diffusion mechanism, then present the transformer-based architecture used to capture structural correlations, and finally detail the optimization-guided training procedure that aligns scenario generation with downstream decision quality. Generating realistic, correlated uncertainty scenarios is important in many domains. In this paper, we use power systems as our motivating example, where the growing penetration of renewable energy makes it necessary to model wind, solar, and load trajectories jointly for more reliable decision-making under uncertainty. Whenever abstract or technical concepts arise, we illustrate them using this context, which also serves as the basis for our numerical experiments.

\subsection{Diffusion-Based Scenario Generation}
Diffusion models have recently emerged as effective methods for learning complex, high-dimensional data distributions~\citep{ho2020denoising,DiffusionSurvey,DiffusionVisionSurvey}. At a high level, they operate by gradually adding noise to data and then learning how to reverse this process, generating new samples through a sequence of denoising steps. This incremental procedure decomposes a difficult high-dimensional generation task into many simpler subproblems, which tends to yield more stable and tractable training dynamics than adversarial or likelihood-based methods. These properties make diffusion models particularly suitable for scenario generation in SP, where both sample diversity and structural fidelity are essential. In this work, we adopt the Denoising Diffusion Probabilistic Model (DDPM) \citep{ho2020denoising}, whose forward process progressively corrupts data with Gaussian noise and whose learned reverse process synthesizes new samples by iteratively removing that noise.

Specifically, let \( \xi^0 \in \mathbb{R}^D \) denote a data sample, such as a multivariate time series representing wind, solar, and load. The forward diffusion process defines a Markov chain that gradually perturbs \( \xi^0 \) over \(K\) diffusion steps. At each diffusion step \(k = 1, \ldots, K\), the sample \(\xi^k\) is generated as:
\begin{equation}
q(\xi^k \mid \xi^{k-1}) 
= \mathcal{N}\!\left(\xi^k;\ \sqrt{\alpha_k}\, \xi^{k-1},\, (1-\alpha_k)I\right),
\end{equation}
where \(\alpha_k \in (0,1)\) controls the signal-to-noise ratio at each diffusion step.

By marginalizing over previous steps, one obtains the closed-form distribution of \( \xi^t \) given the original sample \( \xi^0 \):
\begin{equation}
q(\xi^k \mid \xi^0) = \mathcal{N}(\xi^k; \sqrt{\bar{\alpha}_k}  \xi^0,\ (1 - \bar{\alpha}_k) I),
\end{equation}
where \( \bar{\alpha}_k = \prod_{s=1}^k \alpha_s \). This implies the following sampling equation for each $k = 1, \ldots, K$:
\begin{equation}\label{eq:Diff_Sample}
\xi^k = \sqrt{\bar{\alpha}_k}  \xi^0 + \sqrt{1 - \bar{\alpha}_k}  \epsilon, \quad \epsilon \sim \mathcal{N}(0, I), 
\end{equation}
which is used extensively during training to simulate noisy inputs.

The denoising process aims to probabilistically invert the forward corruption by learning, at each diffusion step \( k \), a conditional distribution of the previous state \( \xi^{k-1} \) given the current noisy state \( \xi^k \). This reverse transition could be formulated as
\begin{equation}
   p_\theta(\xi^{k-1} \mid \xi^k, c) \ \mathcal{N}\!\left(\xi^{k-1};\ \mu_\theta(\xi^k, c, k),\ \Sigma_\theta(\xi^k, c, k)\right), \label{eq:reverse} 
\end{equation}
where \( \mu_\theta(\cdot) \) and \( \Sigma_\theta(\cdot) \) are neural-network–parameterized functions with trainable parameters \( \theta \). The optional vector \( c \) represents exogenous contextual information used to condition the generation process. In our application, this may include observable system-level covariates such as time indices (such as hour of day, season) or weather-related variables. Conditioning allows the model to generate scenario samples that are statistically consistent with given external conditions rather than only following the unconditional historical distribution. Sampling from this reverse transition recursively for \( k = K, \ldots, 1 \), starting from a pure Gaussian noise vector \( \xi^K \sim \mathcal{N}(0, I) \), produces a synthetic scenario \( \hat{\xi}^0 \) that approximates a draw from the data-generating distribution conditioned on \( c \).

The DDPM framework generates scenarios by starting from Gaussian noise and iteratively denoising, producing samples that preserve key statistical and structural properties of the data, including temporal patterns and inter-variable dependencies. Compared with GANs, which rely on an adversarial min--max game and often suffer from unstable training and mode collapse (that is, poor coverage of data diversity), DDPMs offer more stable training and broader distributional coverage. In standard DDPM implementations, the learned reverse-transition functions \( \mu_\theta(\cdot) \) and \( \Sigma_\theta(\cdot) \) in \eqref{eq:reverse} are typically parameterized using U-Net \cite{ronneberger2015u} -- neural architectures that process data through local sliding windows and are particularly effective for capturing short-range patterns. This locality, however, makes such parameterizations less effective at modeling long-range temporal dependencies and complex cross-variable interactions common in high-dimensional multivariate time series. To overcome this limitation, we replace the CNN-based reverse model with a Transformer-based architecture in the denoising process to enhance the structural fidelity and decision relevance of the generated scenarios.

\subsection{Optimization-guided Training}
However, while diffusion models and Transformer-based architectures can capture complex, high-dimensional dependencies among uncertain variables, standard training methods focus solely on minimizing distributional divergence, without accounting for downstream decision performance.

In stochastic optimization problems, the value of generated scenarios lies not just in how realistic they appear, but in how informative they are for optimization. For instance, in long-term capacity planning, a generative model trained to reproduce average historical demand may systematically ignore rare peak events, leading to under-investment and degraded reliability. Thus, a model that appears statistically accurate may still yield poor decisions.

To bridge this gap, we introduce \textsc{Diff2SP} (Diffusion to Stochastic Programming), a framework that directly incorporates decision quality into the training of generative models. Rather than relying purely on statistical losses, \textsc{Diff2SP} aligns the training process with the ultimate goal of optimization. Specifically, we design a unified training objective that combines three components: 
\textit{(i)} a \emph{Noise Prediction Loss}, which follows standard DDPM training and guides the Transformer-based denoiser to accurately estimate the added noise at each diffusion step; 
\textit{(ii)} a \emph{Reconstruction Loss}, which encourages the model to recover temporally consistent samples from noise -- particularly important in multi-variate time series settings; and 
\textit{(iii)} an \emph{Optimization Loss}, which evaluates the impact of generated scenarios on the expected cost of the downstream stochastic decision problem.

Formally, at each diffusion timestep \( k = 1, \ldots, K \), the total loss is given by:
{\begin{equation}
\mathcal{L}^k(\theta) :=    
\lambda_\text{noise}  \mathcal{L}^k_\text{noise}(\theta) + \lambda_\text{recon}  \mathcal{L}^k_\text{recon}(\theta) + \lambda_\text{opt}  \mathcal{L}^k_\text{opt}(\theta), \label{eq:TotalLoss}
\end{equation}
}
where \( \lambda_\text{noise} \), \( \lambda_\text{recon} \), and \( \lambda_\text{opt} \) are hyperparameters that balance the tradeoff between fidelity to the data distribution and relevance to the downstream optimization problem. 

The Noise Prediction Loss ($\mathcal{L}^k_\text{noise}$) trains the model to reverse the forward diffusion process by estimating the Gaussian noise added to clean samples. Given a noisy input \( \xi^k \) at diffusion step \( k \), the model learns to predict the injected noise \( \epsilon \) using a neural network \( \epsilon_\theta(\xi^k, k) \). The loss is defined as, for each $k= 1, \ldots, K$, 
\begin{equation}
\label{eq:NoiseLoss}
    \mathcal{L}^k_\text{noise}(\theta) = \mathbb{E}_{\xi^0, \epsilon, k} \left[\| \epsilon - \epsilon_\theta(\xi^k, k) \|_2^2\right],\  
\end{equation}
where \( \epsilon \sim \mathcal{N}(0, I) \) is the known noise used to construct \( \xi^k \) as in \eqref{eq:Diff_Sample}. Minimizing this loss encourages accurate denoising and improves sample quality during generation.

The Reconstruction Loss ($\mathcal{L}^k_\text{recon}$) encourages the model to recover clean samples that closely match the original training data, promoting the preservation of temporal dynamics and cross-variable dependencies. It penalizes the discrepancy between the reconstructed sample and the ground-truth input. Specifically, for $k= 1, \ldots, K$, we define 
\begin{equation}\label{eq:recon_loss}
    \mathcal{L}^k_\text{recon}(\theta) = \mathbb{E}_{\xi^0, k} \left[\| \xi^0 - p_\theta(\xi^k, k) \|_2^2\right], 
\end{equation}
where \( p_\theta(\xi^k, k) \in \mathbb{R}^D \) denotes the model’s prediction of the original clean sample \( \xi^0 \) from its noisy version \( \xi^k \) at step \( k \).

The Optimization Loss ($\mathcal{L}^k_\text{opt}$) is unique to our framework, as it directly aligns scenario generation with the downstream optimization objective. It penalizes discrepancies in optimization outcomes when using generated versus historical data. Formally, we define:
\begin{equation}\label{eq:opt_loss}
\mathcal{L}^k_\text{opt}(\theta) = \mathbb{E}_{x, k}\left[\| F^*_{ \xi^0} - F^*_{p_\theta(\xi^k, k)} \|_2^2\right], \  k= 1, \ldots, K, 
\end{equation}
where \( \xi^0 \) is the ground-truth sample data, \( \xi^k \) is its noisy version at diffusion step \( k \), and \( p_\theta(\xi^k, k) \) is the model's reconstruction. The notation $F^*_{ \xi}$ is defined in \eqref{eq:SP} to denote the optimal value of an optimization problem under scenario \( \xi \). By minimizing this loss, the model learns to produce scenarios that yield similar decision outcomes as those based on real data, thereby capturing decision-critical structure rather than merely replicating statistical features.

Unlike standard loss terms, the optimization loss depends implicitly on tranining samples $\xi$. 
To enable gradient-based training, we embed the optimization problem as a differentiable layer within the neural network architecture, which requires computing gradients with respect to the diffusion model parameters \( \theta \) through the optimization layer. Specifically, for a given diffusion step $k = 1, \ldots, K$, we have 
\begin{equation}
\frac{\partial \mathcal{L}^k_{\text{opt}}}{\partial \theta}
= \mathbb{E}_{k} \Big[
  2 \big( F^{*}_{p_{\theta}(\xi^{k}, k)} - F^{*}_{\xi^{0}} \big)\left.
  \frac{\partial F^{*}_{\xi}}{\partial \xi}
  \right|_{\xi = p_{\theta}(\xi^{k}, k)} \times \frac{\partial p_{\theta}(\xi^{k}, k)}{\partial \theta}
\Big].
\label{eq_implicitBP}
\end{equation}
\textcolor{black}{
\begin{theorem}[Theorem 4.1, Fiacco and Ishizuka (1990)]
For the parametric optimization problem $P_0(\xi)$, suppose that the parameter set is $\Xi=\{\xi \in \mathbb{R}^d\}$, and that the functions $f$ and $\nabla_x f$ are continuous on $X \times \mathcal{N}(\xi)$, where $\mathcal{N}(\xi)$ is a neighborhood of $\xi \in \Xi$. Then the value function $F^*(\xi) := \min_{x\in X} f(x,\xi)$ is locally Lipschitz near $\xi$, directionally differentiable at $\xi$, and its directional derivative in direction $v \in \mathbb{R}^d$ is given by:
\begin{equation}
    D F^*(\xi; v) = \min_{x\in S(\xi)} \nabla_{\xi} f(x,\xi)^\top v,
\end{equation}
where $S(\xi)$ is the set of minimizers of $f(\cdot,\xi)$ over $X$.
\end{theorem}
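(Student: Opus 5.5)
The statement is Danskin's theorem for a min-type value function. I will prove it under the standing hypotheses of Fiacco and Ishizuka's Theorem 4.1, which I make explicit here: $X$ is nonempty and compact, and $f$ and the parameter gradient $\nabla_\xi f$ are continuous on $X\times\mathcal{N}(\xi)$. The formula involves $\nabla_\xi f$, so I read the hypothesis on "$\nabla_x f$" as a hypothesis on $\nabla_\xi f$. I also shrink $\mathcal{N}(\xi)$ to a closed ball $B$ around $\xi$.

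The proof has four steps.

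\textbf{Step 1 (preliminaries).} On the compact set $X\times B$, $f$ and $\nabla_\xi f$ are uniformly continuous and bounded. Set $M:=\max_{X\times B}\|\nabla_\xi f\|$.
\begin{itemize}
\item By Weierstrass, $S(\xi')$ is nonempty and compact for every $\xi'\in B$.
\item For $\xi_1,\xi_2\in B$, $|F^*(\xi_1)-F^*(\xi_2)|\le\sup_{x\in X}|f(x,\xi_1)-f(x,\xi_2)|$. The mean value theorem along the segment, which lies in $B$ by convexity, bounds this by $M\|\xi_1-\xi_2\|$. This gives the local Lipschitz claim and, in particular, continuity of $F^*$ on $B$.
\item The argmin map is closed: if $\xi_n\to\xi$, $x_n\in S(\xi_n)$ and $x_n\to\bar x$, then $f(\bar x,\xi)=\lim f(x_n,\xi_n)=\lim F^*(\xi_n)=F^*(\xi)$. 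Hence $\bar x\in S(\xi)$.
\end{itemize}

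\textbf{Step 2 (upper bound).} Fix $v$ and $x\in S(\xi)$. For small $t>0$,
\[
F^*(\xi+tv)-F^*(\xi)\le f(x,\xi+tv)-f(x,\xi)=t\,\nabla_\xi f(x,\xi+\tau_t tv)^\top v
\]
for some $\tau_t\in(0,1)$. By continuity of $\nabla_\xi f$, the $\limsup_{t\downarrow0}$ of the difference quotient is at most $\nabla_\xi f(x,\xi)^\top v$. Minimizing over $x\in S(\xi)$ bounds the upper limit by $\min_{x\in S(\xi)}\nabla_\xi f(x,\xi)^\top v$. The minimum is attained because $S(\xi)$ is compact and the map is continuous.

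\textbf{Step 3 (lower bound; main obstacle).} Take $t_n\downarrow0$ realizing the $\liminf$ of the difference quotient, and choose $x_n\in S(\xi+t_nv)$. Then
\[
F^*(\xi+t_nv)-F^*(\xi)\ge f(x_n,\xi+t_nv)-f(x_n,\xi)=t_n\nabla_\xi f(x_n,\xi+s_nt_nv)^\top v .
\]
By compactness of $X$, pass to a subsequence with $x_n\to\bar x$. Step 1 gives $\bar x\in S(\xi)$. Uniform continuity of $\nabla_\xi f$ on $X\times B$ then gives $\nabla_\xi f(x_n,\xi+s_nt_nv)\to\nabla_\xi f(\bar x,\xi)$. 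Hence
\[
\liminf \ge \nabla_\xi f(\bar x,\xi)^\top v\ge\min_{x\in S(\xi)}\nabla_\xi f(x,\xi)^\top v.
\]
The delicate points are exactly these two: the minimizers $x_n$ of the perturbed problems must accumulate only inside $S(\xi)$, which relies on the continuity of $F^*$ established first, and the gradient must converge uniformly in $x$.

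\textbf{Step 4 (conclusion).} Steps 2 and 3 together show that the one-sided limit exists and equals $\min_{x\in S(\xi)}\nabla_\xi f(x,\xi)^\top v$. This is the stated directional derivative $DF^*(\xi;v)$, and together with the local Lipschitz property from Step 1 it proves the theorem.
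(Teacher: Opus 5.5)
The paper gives no proof of this statement (it is quoted from Fiacco and Ishizuka and is Danskin's theorem), and your argument is the standard proof of that result and is correct: the mean-value Lipschitz bound, the upper estimate from a fixed $x\in S(\xi)$, and the lower estimate from $x_n\in S(\xi+t_nv)$ whose cluster points lie in $S(\xi)$ by closedness of the argmin map all go through as you wrote them. Your two repairs of the hypotheses are necessary rather than cosmetic. With only $f$ and $\nabla_x f$ continuous the conclusion fails: e.g.\ $f(x,\xi)=\sqrt{|\xi|}$, independent of $x$, gives a value function that is not Lipschitz at $0$. Without compactness of $X$ it fails too: e.g.\ $X=\mathbb{R}$ and $f(x,\xi)=(x\xi-1)^2$ give $F^*(0)=1$ but $F^*(\xi)=0$ for $\xi\neq 0$, even though $f$ and $\nabla_\xi f$ are continuous everywhere.
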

Here the directional derivative of the value function can be expressed via
$\nabla_{\xi} f(x,\xi)$ evaluated at minimizers $x\in S(\xi)$.} With the differentiability of the optimal value function, we can write out the  derivative of the overall loss function at $k = 1, \ldots, K$:
\begin{equation}
\label{eq:chain_rule}
\frac{\partial \mathcal{L}^k}{\partial \theta}
= \lambda_{\text{noise}} \frac{\partial \mathcal{L}^k_{\text{noise}}}{\partial \theta}
+ \lambda_{\text{recon}} \frac{\partial \mathcal{L}^k_{\text{recon}}}{\partial \theta}
+ \lambda_{\text{opt}} \frac{\partial \mathcal{L}^k_{\text{opt}}}{\partial \theta}.
\end{equation}

\begin{remark}[Differentiability]\label{rmk:Diff}
We justify that the loss components $\mathcal{L}_{\text{noise}}$ and $\mathcal{L}_{\text{recon}}$ are differentiable with respect to the model parameters $\theta$. 
In the definition of the noise prediction loss in \eqref{eq:NoiseLoss},  $\epsilon_\theta$ is a neural network parameterized by $\theta$, and $\xi^k$ is a noisy version of the clean sample $\xi^0$ constructed via a forward diffusion process. For any fixed realization $(\xi^k, k, \epsilon)$, the integrand is a smooth function of $\theta$ provided that the neural network $\epsilon_\theta$ is composed of differentiable functions, such as affine transformations and activation functions (such as ReLU, GELU, or tanh), which are commonly used in practice. Assuming the network architecture is fixed and the domain of $\theta$ is open, the function $\epsilon_\theta(\xi^k, k)$ is differentiable with respect to $\theta$, and the gradient of the loss can be written as, for $k = 1, \ldots, K$,
{\begin{equation}
\frac{\partial \mathcal{L}^k_{\text{noise}}}{\partial \theta}
= \mathbb{E}_{\xi^0, \epsilon, k}
\left[-2(\epsilon-\epsilon_\theta(\xi^k,k))^\top
\frac{\partial \epsilon_\theta(\xi^k,k)}{\partial \theta}\right].
\label{eq:Diff_noise}
\end{equation}}
The interchange of differentiation and expectation is justified by the dominated convergence theorem, under the assumption that the network outputs and their derivatives grow at most polynomially and are integrable under the joint distribution of $(\xi^0, \epsilon, k)$, which is satisfied when the diffusion noise $\epsilon$ is Gaussian and $\epsilon_\theta$ has bounded weights or regularization.

For the reconstruction loss, $p_\theta$ in \eqref{eq:recon_loss} is another neural network parameterized by $\theta$, trained to reconstruct the clean input $\xi^0$ from the noisy sample $\xi^k$. Analogously, for any fixed $(\xi^0, \xi^k, k)$, the squared error is differentiable in $\theta$ due to the smoothness of $p_\theta$. Therefore, the gradient of the reconstruction loss is given by
{\begin{equation}
\frac{\partial \mathcal{L}^k_{\text{recon}}}{\partial \theta} \\
=  \mathbb{E}_{\xi^0, k} \left[ -2 \left(\xi^0 - p_\theta(\xi^k, k)\right)^\top \frac{\partial p_\theta(\xi^k, k)}{\partial \theta} \right]. \label{eq:Diff_recon}
\end{equation}
}
Again, the interchange of expectation and differentiation is justified by the dominated convergence theorem, assuming the outputs and gradients of $p_\theta$ are integrable under the distribution of $(\xi^0, t)$.
\end{remark}

\begin{remark}[Incompatibility of GANs with Optimization-Aware Training]
\label{remark_GAN}
While our approach focuses on diffusion models, it is natural to ask whether alternative generative frameworks, especially GANs, could also support optimization-aware training. GANs have been widely adopted for data generation tasks and may initially appear to be a viable alternative. However, their training paradigm presents several fundamental limitations that make them incompatible with decision-aligned objectives in SP. Standard GANs are trained using a minimax objective:
\begin{equation}
    \min_G \max_D ~ \mathbb{E}_{x \sim\mathcal{P}_{\text{data}}} [\log D(x)] 
    + \mathbb{E}_{z \sim \mathcal{N}(0, I)} [\log (1 - D(G(z)))],
\end{equation}
where \( z \sim \mathcal{N}(0, I) \) is a latent noise vector, \( G \) is the generator, and \( D \) is the discriminator. The generator learns to produce samples that maximize the discriminator’s classification error, without consideration of external task-specific objectives. Consequently, loss functions such as \( \mathcal{L}_{\text{opt}} \), which are unrelated to the discriminator’s feedback, cannot contribute to the generator’s gradient updates.

In contrast, Diff2SP is designed to support end-to-end training with optimization-aware loss functions. It relies on differentiable optimization layers, where gradients propagate through the optimization solution via implicit differentiation (see Eq.~\eqref{eq_implicitBP}) with \( \xi = G(z) \). This mechanism assumes that the generator \( G \) is smooth and stable. However, GANs are known to suffer from training instabilities such as mode collapse and vanishing gradients, particularly in data-scarce settings, rendering \( \partial \xi / \partial \theta \) unreliable or undefined in practice.

Furthermore, GANs do not provide explicit likelihoods or score functions \( \nabla \log p(\xi) \), which are essential for principled alignment between generated distributions and downstream optimization objectives. They also tend to underrepresent low-probability but high-impact scenarios, which are often critical in SP contexts. 
By contrast, Diff2SP avoids these limitations through its iterative denoising structure and attention-based modeling, enabling stable gradient flow and effective coverage of complex, multimodal uncertainties. In particular, the denoising network \( \epsilon_\theta(\xi^k, k) \) learned during training implicitly estimates the score function at each diffusion step, providing access to \( \nabla \log p(\xi^k) \). This score information can be used to guide distributional alignment or regularization strategies that directly account for both probabilistic fidelity and decision quality, which is incapable of adversarial frameworks like GANs.

\end{remark}

\section{Sample Efficiency of \textsc{Diff2SP}}\label{sec:Complexity}
In this section, we analyze the number of samples required for \textsc{Diff2SP} to achieve a target level of decision quality and compare this requirement against GAN. The key idea is to formalize regret, which measures the loss in decision quality due to learning from an approximate distribution, and apply a Probably Approximately Correct (PAC)–style analysis to link distributional accuracy to decision performance.

While our theoretical analysis of sample complexity does not depend on any particular generative model architecture, it rests on two essential properties: the scenario generator is trained with a loss function that explicitly accounts for downstream decision quality, such as in \eqref{eq:TotalLoss}, and the associated function class has bounded complexity, in the sense that its Rademacher complexity vanishes at rate \( O(1/\sqrt{N}) \) as the sample size \( N \) grows. 
Note that in principle, diffusion is not necessary to achieve the theoretical regret bounds; other generators, such as GANs or autoregressive models, could obtain comparable statistical guarantees if trained with an optimization-aware objective and sufficient regularization.  
In practice, however, diffusion models offer a distinct advantage because their denoising score-matching objective is a stable, fully differentiable learning problem. This property allows decision-dependent losses, often nonconvex and high-variance, to be incorporated seamlessly into end-to-end training through automatic differentiation. By contrast, adversarial models such as GANs involve two-player dynamics and often yield non-differentiable generator outputs, making it difficult to propagate decision-sensitive gradients effectively.  
Consequently, diffusion models combine theoretical soundness with practical tractability, providing a natural framework for decision-centric scenario generation.

We now formalize this setting to analyze the relationship between the learned and true distributions in terms of their induced decisions and resulting regret. Let $\mathcal{P}$ denote the true uncertainty distribution and $\mathcal{Q}_\theta$ the distribution produced by a generative model with parameters~$\theta$.
Let $ \tilde{x}_\theta \in \arg\min_{x \in \mathcal{X}} \mathbb{E}_{\xi \sim \mathcal{Q}_\theta}[f( x, \xi)] $ denote the decision optimized under $ \mathcal{Q}_\theta $, and let $ x^* \in \arg\min_{x \in \mathcal{X}} \mathbb{E}_{\xi \sim\mathcal{P}}[f( x, \xi)] $ denote the decision optimized under the true distribution $\mathcal{P} $, assuming both optimal solution sets are nonempty.
Then we can define the \emph{expected regret} as follows: 
	\begin{equation}\label{eq:regret}
		R(\theta)=\mathbb{E}_{\xi\sim\mathcal{P}}\bigl[f(\tilde{x}_\theta,\xi)\bigr]-
		\mathbb{E}_{\xi\sim\mathcal{P}}\bigl[f( x^*,\xi)\bigr].
	\end{equation}
    
To support the analysis in this section, we impose the following three standing assumptions.

\begin{assumption}[Lipschitz continuity in uncertainty]\label{as:Lip}
The uncertainty space $ \Xi \subset \mathbb{R}^m $ is compact, and the decision space $ \mathcal{X} \subset \mathbb{R}^n $ is also compact with a diameter $ D := \sup_{x, x' \in \mathcal{X}} \|x - x'\| $.  
The cost function $ f : \mathcal{X} \times \Xi \to \mathbb{R} $ is $ L $-Lipschitz continuous with respect to $ \xi $, uniformly over $ x \in \mathcal{X} $; that is, $|f( x, \xi) - f( x, \xi')| \le L \|\xi - \xi'\|$ for all $x \in \mathcal{X},\ \xi, \xi' \in \Xi.
$
\end{assumption}

\begin{remark}
Since Assumption~1 requires the uncertainty space $\Xi \subset \mathbb{R}^m$ 
to be compact, both the true distribution $\mathcal{P}$ and the generated distribution $Q_{\theta}$ are supported on a compact set. This immediately implies that $\mathcal{P}$ and $Q_{\theta}$ have finite first moments. Consequently, the 1-Wasserstein distance $W_1(\mathcal{P},Q_{\theta})$ between $\mathcal{P} $ and  $ \mathcal{Q}_\theta $, defined as
$
W_1(\mathcal{P}, \mathcal{Q}_\theta) := \inf_{\gamma \in \Pi(P, \mathcal{Q}_\theta)} \mathbb{E}_{( \xi, \tilde{\xi}) \sim \gamma} \left[ \|\xi - \tilde{\xi}\| \right]$,
with $ \Pi(\mathcal{P}, \mathcal{Q}_\theta) $ being the set of all joint distributions (couplings) with marginals $\mathcal{P} $ and $ \mathcal{Q}_\theta $, is well-defined. 
\end{remark}


\begin{assumption}[Bounded capacity via Rademacher complexity]\label{as:Rademacher}
Let $\mathcal{F}=\{f(x_\theta,\cdot):\theta\in\Theta\}$ denote the function class
induced by the generative model through its decisions $x_\theta$.
For a sample $S=(\xi_1,\dots,\xi_N)\sim\mathcal{P}^N$ and independent Rademacher random
variables $\sigma_1,\dots,\sigma_N\sim\mathrm{Unif}\{\pm 1\}$, the empirical
Rademacher complexity of $\mathcal{F}$ is defined as
\[
\mathcal{R}_N(\mathcal{F}):= \mathbb{E}_{S,\sigma}\left[
   \sup_{f\in\mathcal{F}} \frac{1}{N}\sum_{i=1}^N \sigma_i f(\xi_i)
\right].
\]

We assume that $\mathcal{F}$ has bounded capacity in the sense that $\mathcal{R}_N(\mathcal{F}) \le \frac{C}{\sqrt{N}}$, for some constant $C>0$ independent of $N$. 
\end{assumption}

\begin{remark}
When $\mathcal{F}$ is induced by a deep neural network $\epsilon_\theta$ with $L_{\text{net}}$ layers and 1-Lipschitz activations such as ReLU, \citet{bartlett2017spectrally,neyshabur2017role} show that
\[
\mathcal{R}_N(\mathcal{F})
\le \frac{C}{\sqrt{N}} \cdot \prod_{l=1}^{L_{\text{net}}}\|\mathbf{W}^{(l)}\|_2.
\]
With spectral normalization, $\ell_2$ weight decay, or the implicit bias of SGD,
it is standard to enforce $\prod_{l=1}^{L_{\text{net}}}\|\mathbf{W}^{(l)}\|_2 \le \mathbf{B}$ 
for a constant $\mathbf{B}$ independent of $N$.  
This yields $\mathcal{R}_N(\mathcal{F}) = O(1/\sqrt{N})$, as required for the sample
complexity analysis.
\end{remark}

Before presenting the main theoretical results, we establish a relationship
between the accuracy of the generated distribution and the quality of the resulting decision.
Under the Lipschitz continuity assumption, the decision regret can be directly bounded
by the 1-Wasserstein distance between the true and generated distributions.
This result serves as the analytical link between distributional learning and
decision performance and supports all subsequent complexity analyses.
The proof of Lemma~\ref{lem:RegretBd} is provided in Appendix~\ref{sec:App_RegretBd}.

\begin{lemma}[Regret bound]\label{lem:RegretBd}
	Under Assumption \ref{as:Lip}, we have that 
	\begin{equation}\label{eq:RegretBound}
	R(\theta) \le 2L \cdot W_1(\mathcal{P}, \mathcal{Q}_\theta).  
	\end{equation}
\end{lemma}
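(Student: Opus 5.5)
The plan is the standard add-and-subtract decomposition. It bounds the regret by two terms, each the gap between an expectation under $\mathcal{P}$ and under $\mathcal{Q}_\theta$ of a function that is $L$-Lipschitz in $\xi$. Write $F_{\mathcal{P}}(x):=\mathbb{E}_{\xi\sim\mathcal{P}}[f(x,\xi)]$ and $F_{\mathcal{Q}}(x):=\mathbb{E}_{\xi\sim\mathcal{Q}_\theta}[f(x,\xi)]$. Both are well defined and finite, since $f$ is continuous in $\xi$ (by Assumption~\ref{as:Lip}) and both measures live on the compact set $\Xi$; I will also take $f(x,\cdot)$ measurable and bounded for each $x$. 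The regret then decomposes as
\begin{equation*}
R(\theta)=\bigl[F_{\mathcal{P}}(\txth)-F_{\mathcal{Q}}(\txth)\bigr]+\bigl[F_{\mathcal{Q}}(\txth)-F_{\mathcal{Q}}(x^*)\bigr]+\bigl[F_{\mathcal{Q}}(x^*)-F_{\mathcal{P}}(x^*)\bigr].
\end{equation*}

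The middle bracket is nonpositive, because $\txth$ minimizes $F_{\mathcal{Q}}$ over $\mathcal{X}$ and $x^*\in\mathcal{X}$. So it suffices to show $|F_{\mathcal{P}}(x)-F_{\mathcal{Q}}(x)|\le L\,W_1(\mathcal{P},\mathcal{Q}_\theta)$ for every fixed $x\in\mathcal{X}$. Applying this at $x=\txth$ and $x=x^*$ then gives the factor $2L$.

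For the uniform bound, fix $x$ and take any coupling $\gamma\in\Pi(\mathcal{P},\mathcal{Q}_\theta)$. Then
\begin{equation*}
|F_{\mathcal{P}}(x)-F_{\mathcal{Q}}(x)|=\bigl|\mathbb{E}_{(\xi,\tilde\xi)\sim\gamma}[f(x,\xi)-f(x,\tilde\xi)]\bigr|\le L\,\mathbb{E}_\gamma\|\xi-\tilde\xi\|,
\end{equation*}
using the uniform Lipschitz property from Assumption~\ref{as:Lip}. Taking the infimum over $\gamma$ yields $L\,W_1(\mathcal{P},\mathcal{Q}_\theta)$. Alternatively, I can cite Kantorovich--Rubinstein duality applied to $f(x,\cdot)/L$, but the direct coupling argument avoids needing duality and avoids needing the infimum to be attained.

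The only step needing care is the measurability and integrability bookkeeping: $f(x,\cdot)$ must be integrable under $\gamma$. This follows from continuity on compact $\Xi$, which makes $f(x,\cdot)$ bounded. One should also note that $\txth$ may depend on $\theta$ but is fixed once $\theta$ is fixed, so applying the pointwise bound at $x=\txth$ is legitimate. The uniformity of $L$ over $x$ is precisely what lets the same constant be used at both points.
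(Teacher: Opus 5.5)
Your proposal is correct and matches the paper's proof: you use the same three-term add-and-subtract decomposition, show the middle term is nonpositive by optimality of $\tilde{x}_\theta$ under $\mathcal{Q}_\theta$, and bound each outer term by $L\,W_1(\mathcal{P},\mathcal{Q}_\theta)$. The only difference is that you bound the outer terms directly through an arbitrary coupling, whereas the paper cites Kantorovich--Rubinstein duality; your coupling bound is just the elementary direction of that duality, so it proves the same estimate with less machinery.
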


By the triangle inequality of the 1-Wasserstein distance \citep[p.94]{OptimalTransport}, for any generator $\mathcal{Q}_\theta$ supported on $\Xi$,
\begin{equation}
W_1(\mathcal{P}, \mathcal{Q}_\theta) \le W_1(\mathcal{P}, \mathcal{P}_N) + W_1(\mathcal{P}_N, \mathcal{Q}_\theta),
\end{equation}
where $W_1(\mathcal{P}, P_N)$ is the sampling error and $W_1(\mathcal{P}_N, \mathcal{Q}_\theta)$ is the generating error. Since the sampling term is method-independent and converges at a known rate, we focus on the generating error to characterize and compare the sample complexities of Diff2SP and GAN.

For any decision $x\in\mathcal{X}$, define the true and empirical objectives $L(x):=\mathbb{E}_{\xi\sim\mathcal{P}}[f(x,\xi)]$ and $\widehat L_N(x):=\frac1N\sum_{i=1}^N f(x,\xi_i)$. Given a generator $\mathcal{Q}_\theta$, the corresponding decision is obtained by solving $x_\theta\in\arg\min_{x\in\mathcal{X}} \mathbb{E}_{\xi\sim \mathcal{Q}_\theta}[f(x,\xi)]$, while the benchmark decision under the true distribution is $x^*\in\arg\min_{x\in\mathcal{X}} L(x)$.
Let $\hat\theta\in\arg\min_{\theta\in\Theta}\widehat L_N(x_\theta)$ be the learned parameter and $\theta^\star\in\arg\min_{\theta\in\Theta} L(x_\theta)$ the best-in-class parameter. The performance of the learned model is assessed through the regret
\begin{equation}\label{eq:regret}
R(\hat{\theta}) := L(x_{\hat{\theta}}) - L(x^*),
\end{equation}
which measures the gap between the expected cost of the decision induced by the learned parameter $\hat{\theta}$ and that of the true optimal decision $x^*$. In other words, regret captures how much decision quality is lost due to both statistical error in estimating $\mathcal{P}$ and approximation error in representing $\mathcal{P}$ by the model class $\{\mathcal{Q}_\theta\}_{\theta\in\Theta}$.

\begin{theorem}[Sample complexity for decision quality of Diff2SP]
\label{thm:diff2sp-bound}
Fix $\varepsilon,\delta \in (0,1)$. Suppose Assumptions~\ref{as:Lip} and \ref{as:Rademacher} hold, and the empirical optimization error satisfies $\epsilon_{\mathrm{opt}} \le \varepsilon/(4L)$. Let $\theta^\star \in \arg\min_{\theta\in\Theta} L(\theta)$ be the best-in-class parameter. Then for any
\[
N \geq C \frac{C_{\mathrm{nn}}^2 + (LB)^2 \log(1/\delta)}{\varepsilon^2},
\]
where $C>0$ an absolute constant, we have with probability at least $1-\delta$,
\[
R(\hat\theta) \le \varepsilon + \big(L(\theta^\star)-L(x^*)\big).
\]
\end{theorem}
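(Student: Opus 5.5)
The plan is the standard excess-risk decomposition for empirical risk minimization over the induced class $\mathcal{F}=\{f(x_\theta,\cdot):\theta\in\Theta\}$, followed by a uniform deviation bound from Assumption~\ref{as:Rademacher}. I write $L(\theta)$ for $L(x_\theta)$ and $\widehat L_N(\theta)$ for $\widehat L_N(x_\theta)$. The regret then splits as
\[
R(\hat\theta)=\bigl[L(\hat\theta)-L(\theta^\star)\bigr]+\bigl[L(\theta^\star)-L(x^*)\bigr].
\]
The second bracket is the approximation error that appears unchanged in the statement, so it suffices to show $L(\hat\theta)-L(\theta^\star)\le\varepsilon$ with probability at least $1-\delta$.

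For the estimation term I would use the usual three-way split:
\[
L(\hat\theta)-L(\theta^\star)=\bigl[L(\hat\theta)-\widehat L_N(\hat\theta)\bigr]+\bigl[\widehat L_N(\hat\theta)-\widehat L_N(\theta^\star)\bigr]+\bigl[\widehat L_N(\theta^\star)-L(\theta^\star)\bigr].
\]
The middle term is at most the empirical optimization error. I interpret $\epsilon_{\mathrm{opt}}$ as measured in $\xi$-units, so it contributes $L\epsilon_{\mathrm{opt}}\le\varepsilon/4$ after multiplying by the Lipschitz constant; if instead it is read as a cost gap, the bound is even simpler. The outer two terms are each bounded by $\sup_{\theta}|L(\theta)-\widehat L_N(\theta)|$.

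For boundedness, I would invoke Assumption~\ref{as:Lip}. Since $\Xi$ is compact with some diameter $B$ (this is how I read the $B$ in the statement), $f(x,\cdot)$ has range of width at most $LB$ for each $x$. After centering, each element of $\mathcal{F}$ takes values in an interval of length $LB$. The symmetrization inequality combined with McDiarmid's bounded-differences inequality, where each coordinate changes the supremum by at most $LB/N$, gives with probability at least $1-\delta$
\[
\sup_{\theta\in\Theta}\bigl|L(\theta)-\widehat L_N(\theta)\bigr|\le 2\mathcal{R}_N(\mathcal{F})+LB\sqrt{\tfrac{2\log(2/\delta)}{N}}\le \frac{2C_{\mathrm{nn}}}{\sqrt N}+LB\sqrt{\tfrac{2\log(2/\delta)}{N}}.
\]
Here $C_{\mathrm{nn}}$ is the Rademacher constant from Assumption~\ref{as:Rademacher}. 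The two-sided version comes from applying the one-sided argument to $\mathcal{F}$ and to $-\mathcal{F}$, which have the same Rademacher complexity, and using a union bound; this is why $\log(2/\delta)$ appears. Summing gives
\[
L(\hat\theta)-L(\theta^\star)\le \frac{4C_{\mathrm{nn}}}{\sqrt N}+2LB\sqrt{\tfrac{2\log(2/\delta)}{N}}+\frac{\varepsilon}{4}.
\]

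The last step is to require that the first two terms total at most $3\varepsilon/4$. This holds once $\sqrt N\ge c\,(C_{\mathrm{nn}}+LB\sqrt{\log(1/\delta)})/\varepsilon$. Squaring and using $(a+b)^2\le 2a^2+2b^2$, and absorbing $\log 2$ into the constant since $\delta<1$, gives exactly the stated condition $N\ge C\,(C_{\mathrm{nn}}^2+(LB)^2\log(1/\delta))/\varepsilon^2$. The constant $C$ in the theorem is absolute and distinct from the one in Assumption~\ref{as:Rademacher}.

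I expect the main obstacle to be bookkeeping rather than depth. The statement leaves $B$ and $\epsilon_{\mathrm{opt}}$ implicit, so I must fix a precise meaning for each, namely the diameter of $\Xi$ and the optimization error in $\xi$-units, so that the McDiarmid range and the $\varepsilon/4$ budget match the statement. I also need to confirm that Assumption~\ref{as:Rademacher}, stated for the expected Rademacher complexity, is exactly what symmetrization requires, which it is.
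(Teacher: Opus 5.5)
Your argument follows the paper's: split off $L(\theta^\star)-L(x^*)$, split $L(\hat\theta)-L(\theta^\star)$ into three terms, and control the outer two by symmetrization plus McDiarmid under Assumption~\ref{as:Rademacher}. In two places you are more careful than the paper. The paper takes $\hat\theta$ to be an exact empirical minimizer, so $\epsilon_{\mathrm{opt}}$ never enters. It also uses $LD$, with $D=\mathrm{diam}(\mathcal{X})$, as the bounded-difference constant, but Assumption~\ref{as:Lip} only bounds the oscillation of $f(x,\cdot)$ by $L\,\mathrm{diam}(\Xi)$, which is your $LB$.

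\textbf{The last step fails.} Your union bound over $\mathcal{F}$ and $-\mathcal{F}$ leaves $(LB)^2\log(2/\delta)$, and the $\log 2$ cannot be absorbed ``since $\delta<1$''. As $\delta\to1$, $\log(1/\delta)\to0$ while $\log(2/\delta)\to\log 2$, so no absolute constant gives $\log(2/\delta)\le c\log(1/\delta)$ on $(0,1)$. Nor can the leftover $(LB)^2\log 2$ be charged to $C_{\mathrm{nn}}^2$, since $C_{\mathrm{nn}}$ can be far smaller than $LB$ (e.g.\ when the functions $f(x_\theta,\cdot)$ differ from one another only slightly). As written, you obtain the stated form only for $\delta$ bounded away from $1$, e.g.\ $\delta\le1/2$, where $\log(2/\delta)\le 2\log(1/\delta)$. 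The paper's proof also stops at $\log(2/\delta)$.

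\textbf{Repair.} Avoid the two-sided supremum and apply McDiarmid once to
\[
\Phi(S):=\sup_{\theta\in\Theta}\bigl[L(\theta)-\widehat L_N(\theta)\bigr]+\bigl[\widehat L_N(\theta^\star)-L(\theta^\star)\bigr].
\]
This has bounded differences $2LB/N$. Because $\theta^\star$ does not depend on the sample, its mean is at most $2\mathcal{R}_N(\mathcal{F})$. With probability at least $1-\delta$ this gives $L(\hat\theta)-L(\theta^\star)\le 2C_{\mathrm{nn}}/\sqrt N+LB\sqrt{2\log(1/\delta)/N}+\varepsilon/4$. The stated $N$ then suffices with $\log(1/\delta)$ exactly.

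\textbf{The cost-gap aside is also wrong.} You say that reading $\epsilon_{\mathrm{opt}}$ as a cost gap would be ``even simpler''. Under that reading the middle term is only bounded by $\varepsilon/(4L)$, which exceeds your $\varepsilon/4$ budget when $L<1$, and the statement can then fail outright. Take $\Xi=\mathcal{X}=[-1,1]$ and $f(x,\xi)=\xi x/10$, so $L=1/10$. Let $\mathcal{P}$ be the point mass at $1$, and take two generators, the point masses at $1$ and at $-1$, which induce $x=-1$ and $x=1$ respectively. Assumption~\ref{as:Rademacher} holds with constant $1/10$. For $\varepsilon=1/10$:
\begin{itemize}
\item the second generator is $1/5$-suboptimal for $\widehat L_N$, within the allowed $\varepsilon/(4L)=1/4$;
\item the approximation term is $0$;
\item yet its regret is $1/5>\varepsilon$ for every $N$.
\end{itemize}
So your ``$\xi$-units'' reading is not a harmless convention. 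The hypothesis your proof actually uses is $\widehat L_N(\hat\theta)\le\min_{\theta}\widehat L_N(\theta)+\varepsilon/4$, and you should state it as such (or assume $L\ge1$).
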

{
\textit{Proof Sketch.} The result follows by decomposing the total error into the statistical estimation error and the optimization-induced error. We first bound each component using the regularity conditions and the concentration properties established earlier. 
Combining these bounds yields the stated rate, with the remaining technical details deferred to Appendix~\ref{app:thm2}.}
\begin{remark}[On realizability of diffusion models]
Regarding the last term $\big(L(\theta^\star)-L(x^*)\big)$, in practice the data distribution $\mathcal{P}$ need not lie exactly in the parametric family $\{\mathcal{Q}_\theta:\theta\in\Theta\}$. At the population level, however, exact realizability is mathematically well established: if one has access to the true score function $\nabla_\xi \log p_t(\xi)$ of the forward noising process, then simulating the corresponding reverse-time SDE yields samples exactly from the data distribution $\mathcal{P}$ \citep{anderson1982reverse,song2021score}. By Theorem~2.1 of \citet{anderson1982reverse} and Theorem~2.1 of \citet{song2021score}, the time-reversal of a diffusion process is again an SDE whose drift term involves the score $\nabla_\xi \log p_t(\xi)$. In particular, if the true score function is known, then simulating the reverse-time SDE yields samples exactly from the original data distribution $\mathcal{P}$.

In this sense, diffusion models are realizable in the nonparametric infinite-data setting. When the score function is approximated by a neural network, universal approximation theorems guarantee that sufficiently large networks can approximate the true score arbitrarily well on compact domains \citep{cybenko1989approximation,hornik1991approximation}. Thus, while exact realizability cannot be ensured in finite samples, it is reasonable to interpret the realizability assumption as an idealized case in which the approximation error is negligible.
\end{remark}

\begin{theorem}[Sample complexity for decision quality of GAN]
\label{thm:gan-bound}
Assume Assumptions 1–2 hold, and consider a vanilla JS–GAN trained with a discriminator class of VC-dimension $p$ whose discriminators are $L_p$-Lipschitz with respect to their parameters. Suppose the measuring function is $L$-Lipschitz and bounded in $[-\Delta,\Delta]$. Assume the generator approximately minimizes the empirical Jensen--Shannon (JS) divergence with optimization residual $\epsilon_{\mathrm{opt}} \le \varepsilon^2 / (16 L^2 B^2)$, and that the generator class satisfies $\inf_{\theta\in\Theta}\mathrm{JS}(\hat{\mathcal{P}},\mathcal{Q}_\theta)\le \varepsilon_0$. Then, for any target accuracy $\varepsilon>0$ and confidence level $\delta\in(0,1)$, to guarantee $\Pr\!\bigl[R(\hat\theta)\le\varepsilon\bigr] \ge 1-\delta$, it suffices that 
\[
N \geq\frac{C_1^{2}\,p^{2}
      \log\Big(\tfrac{L\,L_p}{\varepsilon\delta}\Big)}
     {\Big(\tfrac{\varepsilon^{2}}{16L^{2}B^{2}}-\varepsilon_0\Big)^{2}},
\]
where $C_1>0$ is a universal constant.
\end{theorem}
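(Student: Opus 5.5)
The plan is to chain three inequalities: regret is controlled by a Wasserstein distance (Lemma~\ref{lem:RegretBd}); that distance is controlled by the Jensen--Shannon divergence on the compact set $\Xi$; and the divergence is controlled by a uniform-convergence bound for the discriminator class. We then choose $N$ so that the chain delivers $R(\hat\theta)\le\varepsilon$.

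\textbf{Step 1: regret to JS.} By Lemma~\ref{lem:RegretBd}, $R(\hat\theta)\le 2L\,W_1(\mathcal{P},\mathcal{Q}_{\hat\theta})$. Since $\Xi$ is compact with diameter $B$ (I read $B$ as this diameter), $W_1\le B\cdot\mathrm{TV}$. Pinsker's inequality applied through the JS mixture gives $\mathrm{TV}\le 2\sqrt{\mathrm{JS}}$. Hence
\[
R(\hat\theta)\le 4LB\sqrt{\mathrm{JS}(\mathcal{P},\mathcal{Q}_{\hat\theta})},
\]
and it suffices to show $\mathrm{JS}(\mathcal{P},\mathcal{Q}_{\hat\theta})\le \varepsilon^2/(16L^2B^2)$. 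This explains the threshold $\varepsilon^2/(16L^2B^2)$ in both the optimization residual and the denominator.

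\textbf{Step 2: decompose the JS of the learned generator.} The JS computed through the discriminator class is a neural distance $d_{\mathcal{D}}$, a supremum over $D\in\mathcal{D}$ of the GAN objective. I would write
\[
d_{\mathcal{D}}(\mathcal{P},\mathcal{Q}_{\hat\theta})\le \bigl|d_{\mathcal{D}}(\mathcal{P},\mathcal{Q}_{\hat\theta})-d_{\mathcal{D}}(\hat{\mathcal{P}},\mathcal{Q}_{\hat\theta})\bigr|+\inf_\theta d_{\mathcal{D}}(\hat{\mathcal{P}},\mathcal{Q}_\theta)+\epsilon_{\mathrm{opt}}.
\]
The middle term is at most $\varepsilon_0$ by hypothesis, and the optimization term is $\epsilon_{\mathrm{opt}}$. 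The first term is a generalization gap: a uniform deviation $\sup_{D}\bigl|\mathbb{E}_{\mathcal{P}}\log D-\mathbb{E}_{\hat{\mathcal{P}}}\log D\bigr|$.

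\textbf{Step 3: bound the generalization gap.} Following Arora et al.'s generalization analysis for GANs, I would build an $\eta$-net of the discriminator parameter space. Lipschitzness in the parameters with constant $L_p$ makes a net of resolution $\eta\sim\varepsilon/(LL_p)$ sufficient, and the capacity measure $p$ controls its log-cardinality, roughly $p\log(LL_p/\varepsilon)$. The measuring function is bounded in $[-\Delta,\Delta]$, so Hoeffding's inequality plus a union bound over the net shows the gap is at most
\[
C_1\,p\sqrt{\log\!\bigl(LL_p/(\varepsilon\delta)\bigr)/N}
\]
with probability at least $1-\delta$; $\Delta$ is absorbed into $C_1$. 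The factor $p^2$ in the stated bound comes from a loose dimension-dependent net count, which I would not try to sharpen.

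\textbf{Step 4: solve for $N$.} We need the gap to be at most $\varepsilon^2/(16L^2B^2)-\varepsilon_0$. Here the $\epsilon_{\mathrm{opt}}$ slack must be split with the gap, which only changes constants absorbed into $C_1$. Squaring yields the stated $N$.

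\textbf{Main obstacle.} Step 1 is the hard one. It requires passing from the discriminator-restricted divergence to the true JS, which needs the discriminator class to be rich enough, with the approximation absorbed into $\varepsilon_0$. It also requires reconciling that $\varepsilon_0$ is measured against $\hat{\mathcal{P}}$ while the conclusion is about $\mathcal{P}$. I would handle both by working throughout with the neural JS distance and invoking the tightness it has when $\mathcal{D}$ contains near-optimal discriminators. I would state this explicitly as an implicit modelling assumption rather than prove it.
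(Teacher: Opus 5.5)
Your chain is the same as the paper's: Lemma~\ref{lem:RegretBd}, then $W_1\le B\,\mathrm{TV}$, then a Pinsker-type bound, then the split into generalization gap, $\varepsilon_0$ and $\epsilon_{\mathrm{opt}}$, then solving for $N$. However, the constant you use in Step~1 makes Step~4 fail. With $\mathrm{TV}\le 2\sqrt{\mathrm{JS}}$ you get $R(\hat\theta)\le 4LB\sqrt{\mathrm{JS}(\mathcal{P},\mathcal{Q}_{\hat\theta})}$, so the \emph{entire} divergence budget is $\varepsilon^2/(16L^2B^2)$. Step~2 gives $\mathrm{JS}(\mathcal{P},\mathcal{Q}_{\hat\theta})\le \text{gap}+\varepsilon_0+\epsilon_{\mathrm{opt}}$, and the hypothesis lets $\epsilon_{\mathrm{opt}}$ equal $\varepsilon^2/(16L^2B^2)$. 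In that case you would need $\text{gap}+\varepsilon_0\le 0$, which no $N$ achieves.

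Your remark that splitting the $\epsilon_{\mathrm{opt}}$ slack ``only changes constants absorbed into $C_1$'' does not fix this. $\epsilon_{\mathrm{opt}}$ subtracts additively from the denominator; it does not scale the numerator. Even with a smaller residual, say $\varepsilon^2/(32L^2B^2)$, the denominator becomes $\varepsilon^2/(32L^2B^2)-\varepsilon_0$. That is not a constant multiple of $\varepsilon^2/(16L^2B^2)-\varepsilon_0$, and it can be nonpositive while the stated one is positive. So your argument does not yield the stated bound under the stated hypotheses, and your account of where the $16$ comes from is wrong.

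The fix, which is what the paper does, is the sharper Pinsker constant. Let $M=(\mathcal{P}+\mathcal{Q})/2$ and note $\mathrm{TV}(\mathcal{P},M)=\tfrac12\mathrm{TV}(\mathcal{P},\mathcal{Q})$. Pinsker applied to $\mathrm{KL}(\mathcal{P}\|M)$ and to $\mathrm{KL}(\mathcal{Q}\|M)$, with the two bounds averaged, gives $\mathrm{TV}^2\le \mathrm{KL}(\mathcal{P}\|M)+\mathrm{KL}(\mathcal{Q}\|M)=2\,\mathrm{JS}$. Hence $R(\hat\theta)\le 2\sqrt{2}\,LB\sqrt{\mathrm{JS}}$, and the budget is $\varepsilon^2/(8L^2B^2)$. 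The residual hypothesis uses exactly half of it, leaving $\varepsilon_0+\text{gap}\le\varepsilon^2/(16L^2B^2)$, which is the stated denominator.

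With that change, the rest of your plan goes through at the paper's level of rigor. Your Step~3 re-derives, by a net plus Hoeffding and a union bound, what the paper cites as Arora et al.'s Theorem~3.1. Your factor $p$ where $\sqrt{p}$ would suffice is loose but valid, and it reproduces the $p^2$ in the statement. Your ``main obstacle'' paragraph is more careful than the paper. The paper applies that generalization bound directly to the true JS divergence, with $\varepsilon_0$ measured against $\hat{\mathcal{P}}$, and says nothing about it. But the bound holds only for the discriminator-restricted (neural) distance, so the tightness assumption you plan to state explicitly is needed in the paper's argument as well.
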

{\textit{Proof Sketch.}
The proof builds on Theorem \ref{thm:diff2sp-bound} and shows that the proposed procedure preserves the relevant asymptotic property under the stated assumptions. The key step is to control the perturbation introduced by the learned approximation and show that it is asymptotically negligible. 
Substituting this bound into the target objective establishes the desired conclusion. See Appendix~\ref{app:thm3} for the full argument.}

Theorems~\ref{thm:diff2sp-bound} and \ref{thm:gan-bound} are directly comparable since both quantify decision regret. For JS--GANs, controlling the JS divergence between $\mathcal{P}$ and $Q_{\hat\theta}$ yields a decision regret rate of $\mathcal{O}(N^{-1/4})$. In contrast, Diff2SP minimizes an optimization-aware loss that directly bounds the decision risk, yielding the sharper $\mathcal{O}(N^{-1/2})$ rate. 
Thus, while both approaches provide regret guarantees, GANs achieve them indirectly through statistical divergence control, whereas Diff2SP provides a direct guarantee aligned with the downstream decision objective. 
This distinction explains the stronger theoretical rate and the improved empirical robustness of Diff2SP, particularly its ability to avoid mode-collapse behavior that can impose a constant regret floor in GANs. The difference is further captured by the approximation term $\varepsilon_0 := \inf_{\theta\in\Theta} \mathrm{JS}(\mathcal{P},\mathcal{Q}_\theta)$ in Theorem~\ref{thm:gan-bound}. A non-negligible $\varepsilon_0$ can lead to an irreducible component in the regret bound for GAN-based generators. By contrast, Diff2SP directly incorporates decision-aware training through a differentiable denoising objective, making the generated scenarios more closely aligned with downstream SP performance.

\begin{remark}
Our training objective is a weighted combination of the Noise Prediction Loss, the Reconstruction Loss, and the Optimization Loss (i.e., regret). The theoretical regret bounds established in Section~\ref{sec:Complexity} rely on the assumption that the Optimization Loss is explicitly minimized. While our training jointly optimizes all three losses, the regret analysis remains valid under the condition that the Optimization Loss carries sufficient weight in the overall objective. In practice, we ensure that this component is prioritized by choosing a non-negligible weight $ \lambda_3 $, so that minimizing the total objective also yields low regret.
\end{remark}

\section{Experiments}
\label{sec_experiment}
We evaluate Diff2SP through a set of experiments that assess its ability to generate scenarios supporting high quality decisions in SP. Our evaluation considers both statistical fidelity and downstream decision performance under uncertainty. We conduct two experiments, one based on synthetic data and the other on real world data. The synthetic experiments provide a controlled setting for examining structural fidelity and robustness, while the real world case study illustrates the applicability of Diff2SP to complex multivariate time series arising in power systems. Across all experiments, we compare Diff2SP against GAN and study performance under varying levels of data availability and optimization constraints.

\subsection{Numerical simulation}
We begin with a controlled numerical experiment to assess whether Diff2SP can recover structured dependencies embedded in uncertainty distributions. Specifically, we construct a synthetic multivariate Gaussian distribution $\mathcal{N}(0, \Sigma)$, where the covariance matrix $\Sigma \in \mathbb{R}^{d \times d}$ is manually designed to encode a prescribed dependency structure.
Unlike real-world datasets, where such structural dependencies are implicit or noisy, this synthetic design provides a clean testbed for validating statistical fidelity. We set $d = 20$ and generate 50,000 training samples from $\mathcal{N}(0, \Sigma)$. The diffusion model is trained for $200$ epochs with a batch size of $64$ and a learning rate of $10^{-2}$ using the Adam optimizer \cite{kingma2014adam}. To ensure that the learned correlations arise solely from the generative model, we train the diffusion model without any contextual or conditioning inputs.

After training, we draw $n$ samples from the diffusion model and assess \emph{structural fidelity} by comparing the empirical correlation matrix $\widehat{C}(n)$ computed from generated samples with the ground-truth correlation structure $\Sigma$. We quantify the discrepancy using the off-diagonal root mean squared error (RMSE); specifically, $\mathrm{RMSE}(\widehat{C}, \Sigma)
:= \sqrt{
\frac{2}{d(d-1)}
\sum_{1 \le i < j \le d}
\left( \widehat{C}_{ij} - \Sigma_{ij} \right)^2
}$.
Table~\ref{tab:distance} reports the off-diagonal RMSE under different numbers of generated samples \( n \). For both GAN and Diff2SP, the RMSE generally decreases as \( n \) increases, indicating improved recovery of the underlying dependency structure with more samples. Diff2SP consistently attains lower RMSE than GAN in the low-sample regime, demonstrating superior sample efficiency in capturing correlation structure. As \( n \) grows, the performance gap narrows, reflecting convergence of both methods. These empirical trends are consistent with our theoretical sample complexity results, which predict faster convergence for Diff2SP.

\begin{table}[h]
\centering
\caption{Distance comparison under different sample sizes}
\label{tab:distance}
\begin{tabular}{c *{7}{c}}
\toprule
$n$ & 3 & 5 & 10 & 30 & 60 & 100 & 200 \\
\midrule
GAN & 0.8233 & 0.4945 & 0.2051 & 0.2031 & 0.1343 & 0.1303 & 0.1182 \\
Diff2SP & 0.6397 & 0.2885 & 0.1819 & 0.1486 & 0.1106 & 0.1226 & 0.1059 \\
\bottomrule
\end{tabular}
\label{tab:distance}
\end{table}
Figure~\ref{fig_corr} visualizes correlation recovery by comparing the ground-truth matrix with correlation matrices estimated from Diff2SP-generated samples using $n=6$, $20$, and $100$ samples. With only $n=6$ samples, the estimate is noisy, but the global block structure and the dominant within-group dependencies are already visible. At $n=20$, spurious fluctuations are reduced and the main cross-block relationships begin to stabilize. By $n=100$, the estimated matrix closely matches the ground truth, with both block patterns and finer within-block variation largely recovered. Overall, these heatmaps show that Diff2SP learns the underlying joint dependency structure and that structural fidelity improves predictably as more generated samples are used to estimate correlations.

\begin{figure}[h]
  \centering
  \begin{subfigure}[t]{0.45\linewidth}
    \centering
    \includegraphics[width=\linewidth]{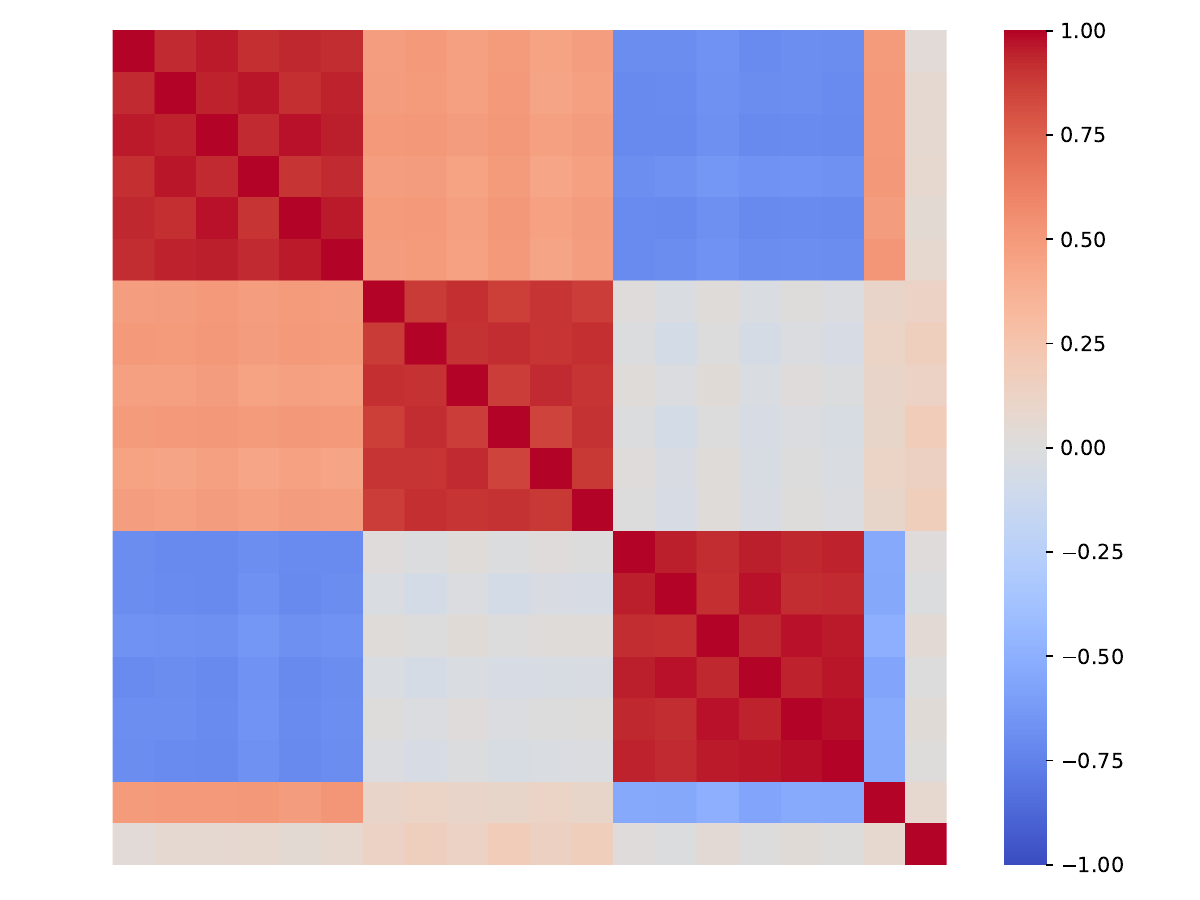}
    \caption{Real Correlation}
  \end{subfigure}%
  \hfill%
  \begin{subfigure}[t]{0.45\linewidth}
    \centering
    \includegraphics[width=\linewidth]{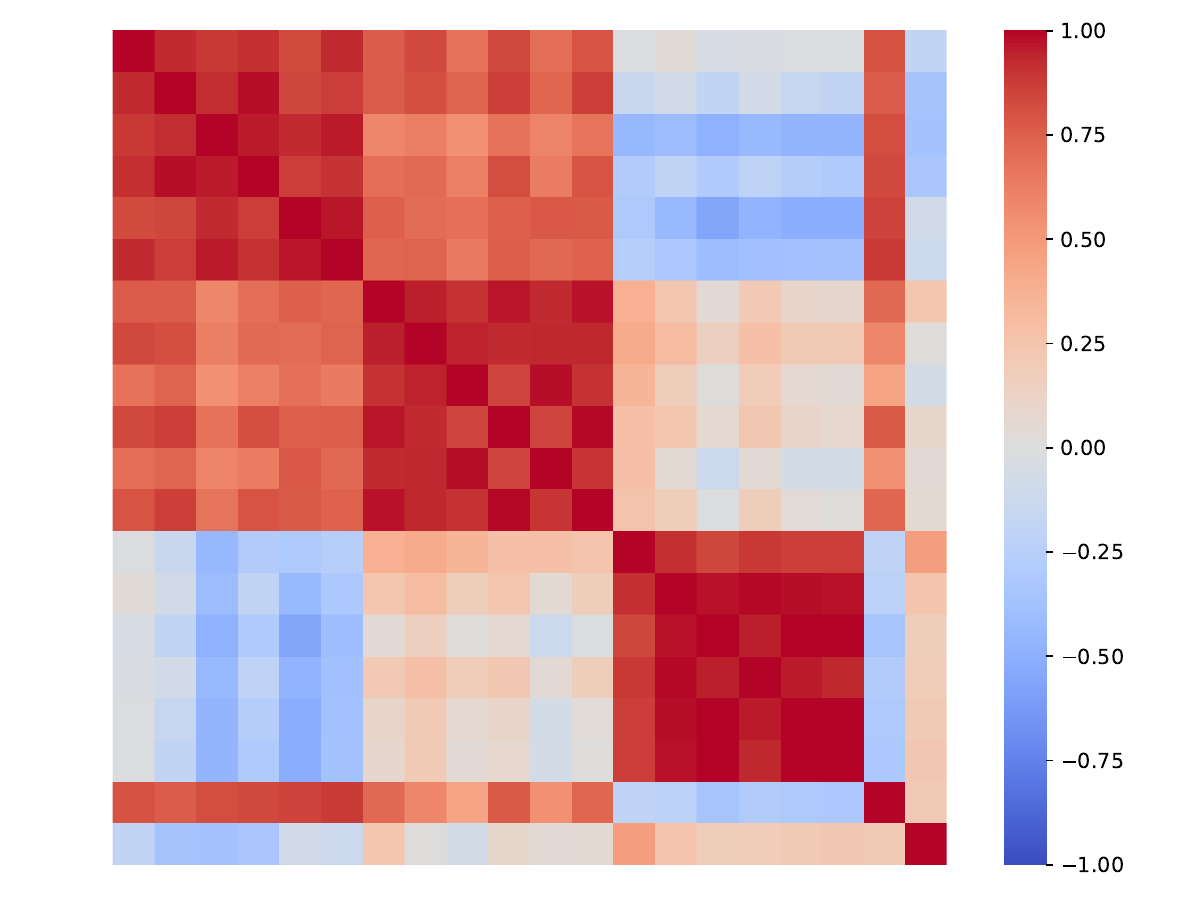}
    \caption{Sample size 6}
  \end{subfigure}

  \begin{subfigure}[t]{0.45\linewidth}
    \centering
    \includegraphics[width=\linewidth]{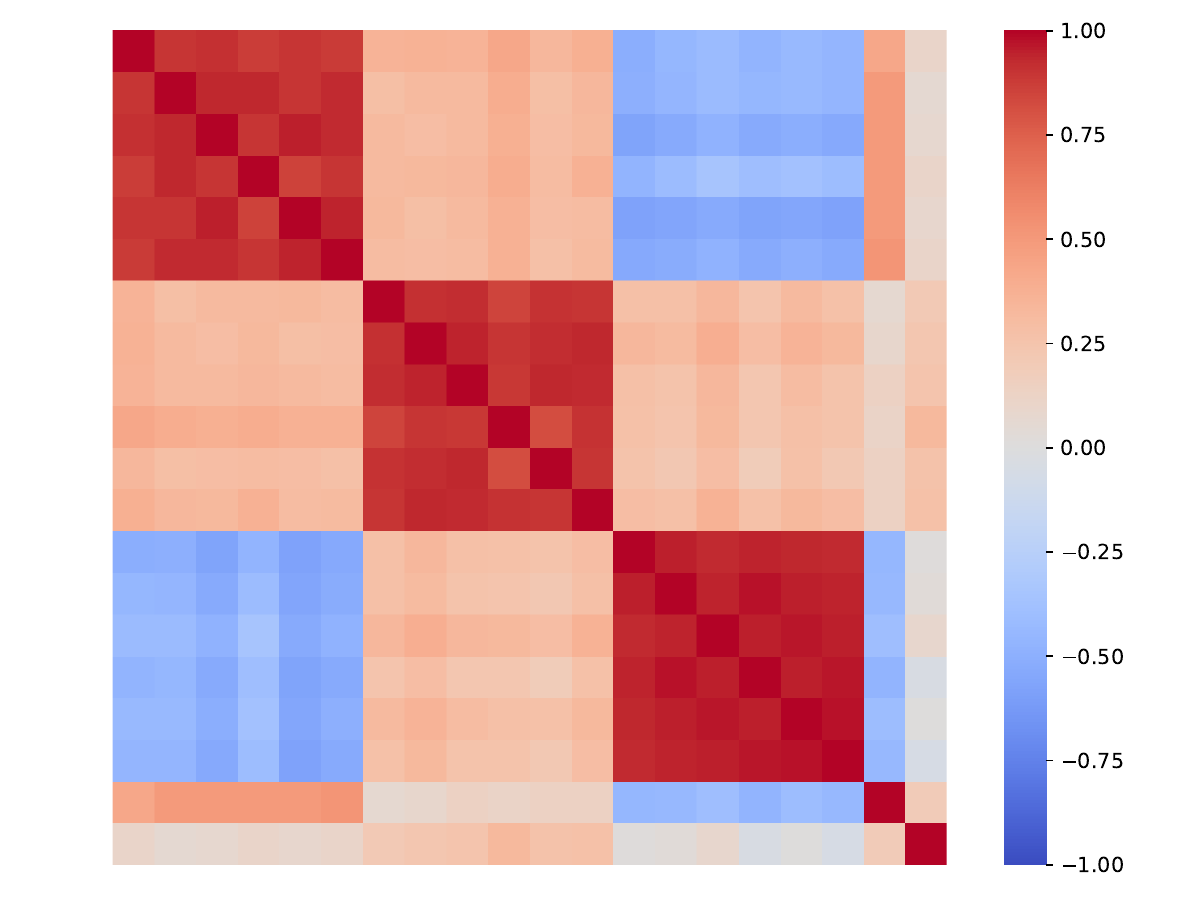}
    \caption{Sample size 100}
  \end{subfigure}%
  \hfill%
  \begin{subfigure}[t]{0.45\linewidth}
    \centering
    \includegraphics[width=\linewidth]{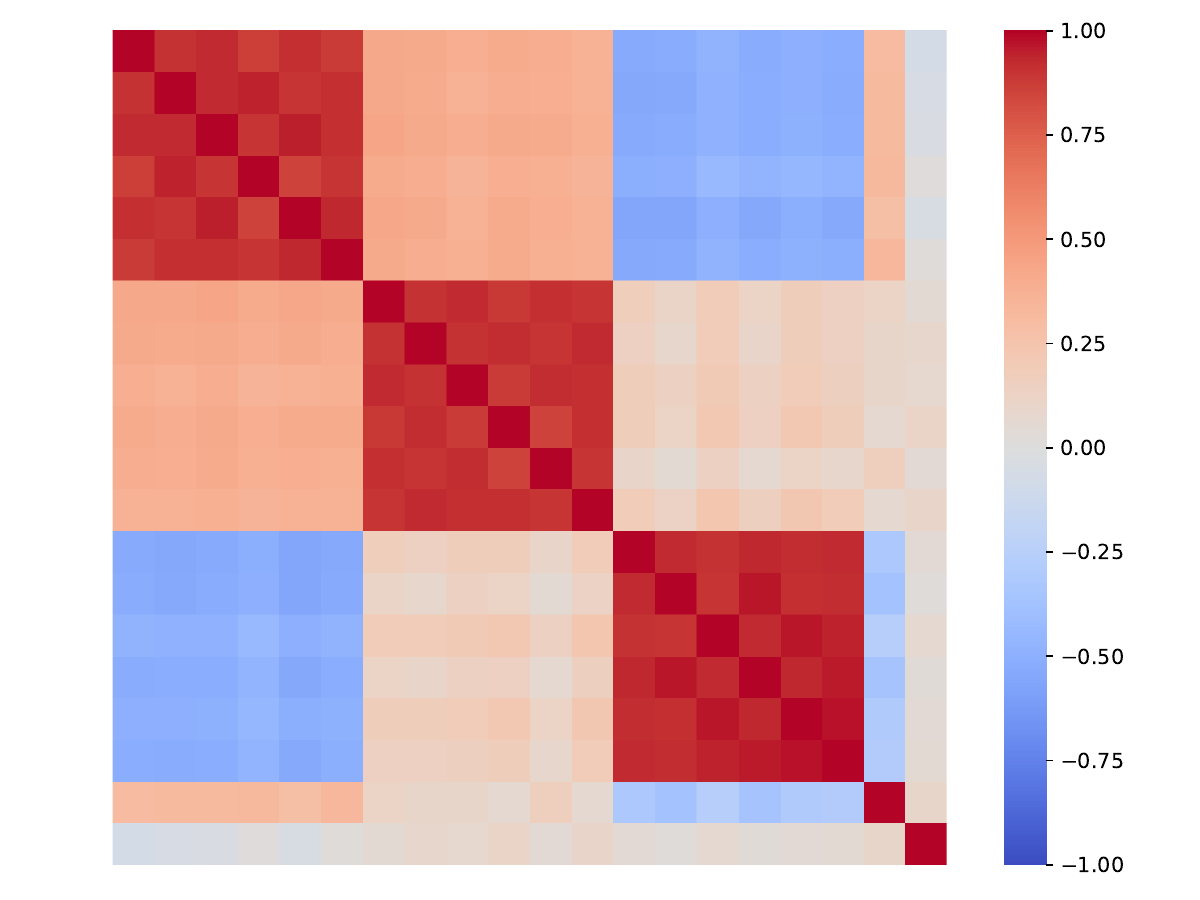}
    \caption{Sample size 500}
  \end{subfigure}
  \caption{Comparison between the ground‑truth correlation matrix and the estimated correlations under different sample sizes.}
  \label{fig_corr}
\end{figure}

Since Diff2SP is designed as a general framework that can be coupled with a wide range of stochastic optimization problems, we consider a quadratic program as a representative downstream task to evaluate the impact of Diff2SP-generated scenarios on decision quality. Specifically, we study the quadratic programming problem $Q( x^\star;{ \xi})=\{\min_{x} \frac{1}{2} x^\top P x + q(\xi)^\top x\mid A x \ge b; G x = h\}$, where $P \in \Re^{18 \times 18} \succeq 0$ and $(A,b,G,h)$ are fixed across experiments, $A \in \Re^{10 \times 18}$, $G \in \Re^{5 \times 18}$ and $q(\cdot):\mathbb{R}^{18} \rightarrow\mathbb{R}^{18}$ is a deterministic (linear) mapping $q(\xi) = W\xi$, with $W \in \mathbb R^{18\times20}$, that maps the 20-dimensional uncertainty vector $\xi$ to the 18 linear cost coefficients.
deterministic mapping from the uncertainty vector $\xi$ to the linear cost coefficients. In the synthetic study, we specify a ground-truth distribution $\mathcal{P}$ for $\xi$ and draw i.i.d.\ historical samples $\{\xi^{(i)}\}_{i=1}^N \sim \mathcal{P}$ for training. After training, Diff2SP induces a model distribution $\mathcal{Q}_\theta$, from which we generate scenarios $\{\hat{\xi}^{(j)}\}_{j=1}^n \sim \mathcal{Q}_\theta$. The QP is then instantiated using the generated coefficients $q(\hat{\xi}^{(j)})$. The optimization layer is implemented using the \texttt{qpth} package \cite{amos2017optnet}, which enables efficient forward and backward passes and allows gradients to propagate through the quadratic program during training.

To complement the correlation recovery analysis, we evaluate the quality of generated scenarios from a decision-making perspective using the optimization error
\begin{equation}
\label{eq_error}
    \text{err} = \Bigg\vert\frac{Q( x^\star;\hat{ \xi})-Q( x^\star;{ \xi})}{Q( x^\star;{ \xi})}\Bigg\vert,
\end{equation}
where $ \xi$ is the historical data and $\hat{ \xi}$ is the generated data. This metric quantifies the relative difference in expected cost between the solution obtained from generated scenarios and the true optimal solution under the ground-truth distribution. Unlike binary failure thresholds, this continuous measure captures both the stability and accuracy of scenario-based optimization outcomes.

\begin{wrapfigure}{r}{.5\textwidth} 
  \centering
  \vspace{-\baselineskip} 
  \includegraphics[width=\linewidth]{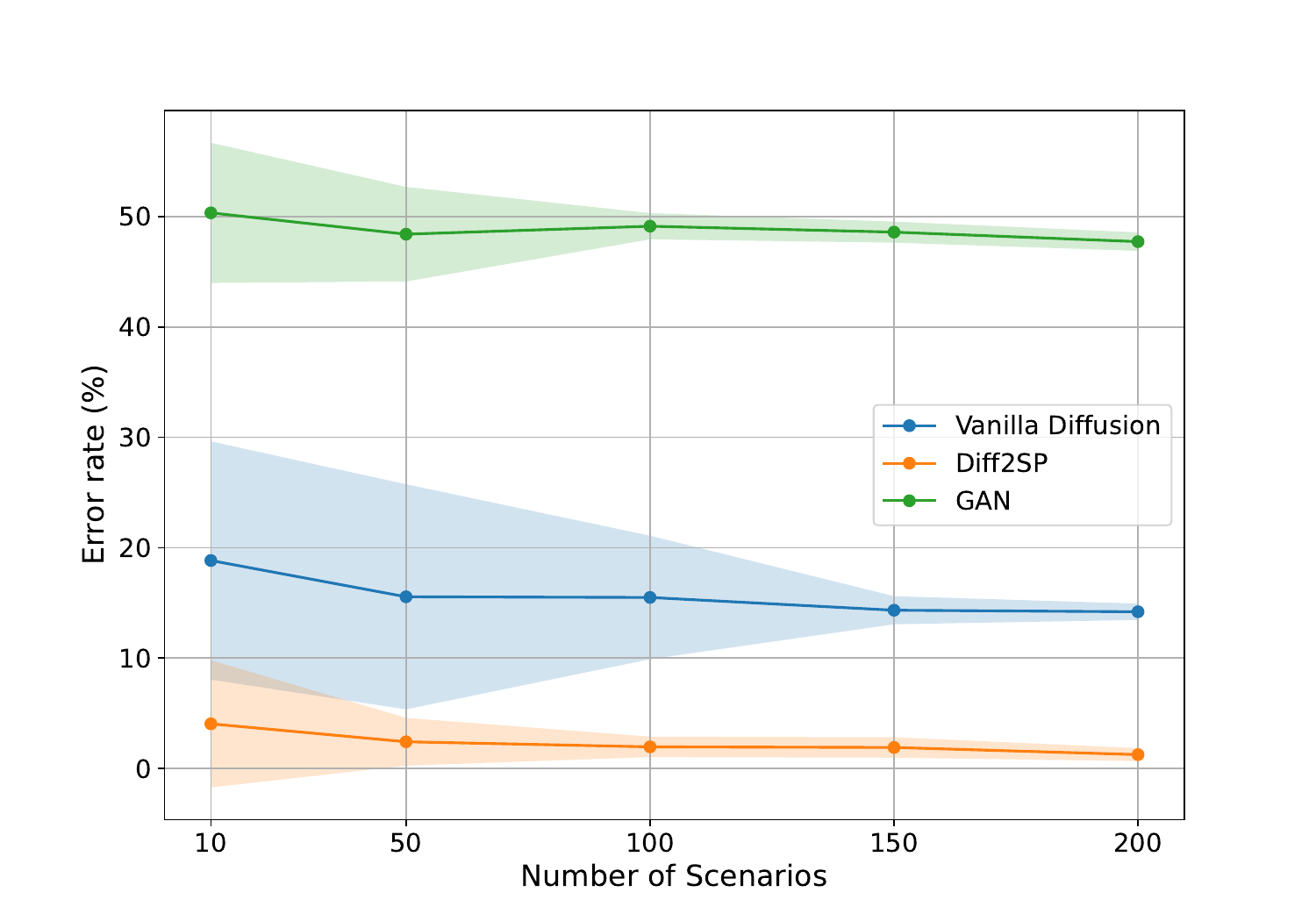}
\caption{Error rate vs Number of Scenarios}
    \label{fig:error}
\end{wrapfigure}
Figure~\ref{fig:error} reports the optimization error achieved by GAN, vanilla diffusion, and Diff2SP as a function of the number of generated scenarios. Diff2SP consistently attains lower optimization error than the baseline methods across all sample sizes, with the largest performance gains appearing in the low-sample regime. As the number of generated scenarios increases, the optimization error decreases for all methods, reflecting improved approximation of the underlying uncertainty distribution; however, substantial performance differences persist across models. In particular, GAN-based generation yields higher optimization error throughout, indicating weaker alignment with decision-relevant structure. This gap is consistent with known limitations of GAN training in multivariate stochastic settings, where the discriminator primarily evaluates marginal or low-order joint statistics and may fail to enforce higher-order dependency patterns that influence downstream optimization outcomes. As a result, GAN-generated scenarios provide a less accurate representation of the uncertainty structure required for reliable decision-making, whereas Diff2SP maintains a consistent advantage by explicitly incorporating decision-aware training objectives.

In addition, we observe that the variance of the optimization error across trials decreases as the number of generated scenarios increases. This reflects the stabilizing effect of larger scenario sets, which provide a more reliable approximation of the underlying uncertainty distribution and reduce variability in optimization outcomes induced by sampling noise. 

Overall, these numerical results demonstrate that Diff2SP achieves both high statistical fidelity and strong decision relevance, supporting its effectiveness for decision-making under uncertainty.

\subsection{Real-World Case Study}
To evaluate the practical applicability of Diff2SP, we apply the same experimental setup to a real-world SP task in power systems. Although the optimization problem structure and evaluation protocol remain the same as in the synthetic experiments, the uncertainty in this setting is derived from real-world multivariate time-series data. In such cases, the true distributional structure is unknown and may involve non-Gaussian characteristics or nonlinear dependencies, making the problem inherently more complex and less controllable than in the synthetic setup. 

\subsubsection{Data Setup}
\label{sec:datasetup}
{ We analyzed two years of historical data encompassing wind, solar, and load time series across six aggregated Local Resource Zones (LRZs) in the MISO footprint.\footnote{The original MISO LRZs are aggregated into six zones: LRZ1, LRZ2\_7, LRZ3\_5, LRZ4, LRZ6, and LRZ8\_9\_10. }, sampled at 5-minute intervals. So at each time slot, the uncertainty vector $\xi\in\mathbb{R}^{3\times6}$.} This dataset comprises a total of 210,240 data points. We used the electricity generation dataset by NERL \footnote{https://github.com/PERFORM-Forecasts/documentation.} for training, and the data was processed and categorized based on the following criteria for further analysis:

{\textbf{Time periods and ramping categories.}
To account for systematic intraday variation in renewable generation and load demand, we partition each day into four time periods. The early-morning period, from midnight to 8 a.m., corresponds to relatively low demand and generation levels. The morning period, from 8 a.m. to noon, captures the transition during which demand and renewable generation begin to increase. The afternoon period, from noon to 5 p.m., is associated with peak renewable generation, particularly solar generation. The evening period, from 5 p.m. to midnight, corresponds to a high-demand regime in which supply is more reliant on wind and other non-solar resources.

In addition to time-of-day effects, we characterize short-term temporal variability through ramping behavior. For each time step $t$, we compute the relative change between the current value and the value observed 30 minutes earlier as
$
r_t = \frac{{\bf \bar{z}}_t - {\bf \bar{z}}_{t-6}}{{\bf \bar{z}}_t},
$
where ${\bf \bar{z}}_t$ denotes the average generation or load at time step $t$, and ${\bf \bar{z}}_{t-6}$ denotes the corresponding value six 5-minute intervals earlier. A 30-minute lookback window was adopted following preliminary data exploration to characterize short-term ramping behavior while avoiding excessive sensitivity to fluctuations between adjacent 5-minute intervals. Based on the empirical distribution of the resulting ramping ratios, a threshold of 4\% was selected to distinguish significant ramping events from routine variations. Specifically, observations are classified into four regimes: significant positive ramping when $r_t \geq 4\%$, moderate positive ramping when $0 \leq r_t < 4\%$, moderate negative ramping when $-4\% < r_t \leq 0$, and significant negative ramping when $r_t \leq -4\%$. These categorical descriptors provide a compact representation of short-term temporal dynamics while preserving information about both the direction and magnitude of local changes in generation and load.
}

In this way, the historical data are categorized into 16 classes. This categorization provides a granular view of generation dynamics, enabling a clearer understanding of the stability and potential volatility in each time period. It also offers insights into operational adjustments needed to balance supply and demand more effectively.

\subsubsection{Results}
We trained Diff2SP on real-world power system data, including wind, solar, and load time series across multiple regions, with the goal of capturing the structural uncertainties inherent in renewable generation and electricity demand. Once training was complete, we generated 800 scenarios per class and solved IEEE-30 DC Optimal Power Flow (OPF) problem using the Progressive Hedging Algorithm (PHA).  Each sample is a multivariate time series and Diff2SP generates scenarios $\hat{\xi}\in\mathbb{R}^{L\times d}$, where $d=18$ corresponds to wind, solar, and load variables across 6 regions. Since the IEEE-30 test system is defined on 30 buses, we convert each generated regional scenario $\hat{\xi}^s$ into a bus-level demand realization $p_d^s\in\mathbb{R}^{30}$, which is regarded as the downstream uncertainty vector $\xi$ in the stochastic optimization layer. We then allocate regional loads to buses via a linear mapping $p_d^s = A\ell^s$ where $A\in\mathbb{R}^{30\times 18}$ is an allocation matrix. This mapping preserves the original 18-variable data representation used for training while producing physically meaningful bus-level demands required by DC-OPF.

Given a demand realization $p_d^s$, we solve the following convex DC-OPF. The decision variables are generator dispatch $p_g^s$ and bus voltage angles $\theta^s$. Let $\mathcal{N}$ denote the set of buses with $|\mathcal{N}|=30$, $\mathcal{E}$ the set of transmission lines, and $\mathcal{G}$ the set of generators. The per-scenario DC-OPF is
\begin{equation}\label{dcopf}
\begin{aligned}
F^*(p_d^s):=&\min_{p_g^s,\theta^s}\sum_{g\in\mathcal{G}}\Big(a_g (p_{g}^s)^2 + b_g p_{g}^s\Big)\\
\text{s.t.}~&\sum_{g\in\mathcal{G}(i)} p_{g}^s - p_{d,i}^s 
= \sum_{(i,j)\in\mathcal{E}} f_{ij}^s, ~\forall i\in\mathcal{N}\\
&f_{ij}^s ~=~ B_{ij}\,(\theta_i^s-\theta_j^s), ~ \forall (i,j)\in\mathcal{E},\\
&-\overline{F}_{ij}\le f_{ij}^s \le \overline{F}_{ij}, ~ \forall (i,j)\in\mathcal{E},\\
&\underline{P}_g \le p_{g}^s \le \overline{P}_g, ~ \forall g\in\mathcal{G},\\
&\theta_{r}^s = 0,\quad \forall i\in\mathcal{N}.
\end{aligned}
\end{equation}
Here $B_{ij}$ is the line susceptance, $\overline{F}_{ij}$ is the line flow limit, $r$ denotes the reference bus, the superscript $s$ indexes different scenarios. {The downstream OPF problem is evaluated on the standard IEEE 30-bus benchmark system, using the case30 network data provided by PYPOWER \cite{PYPOWER}.
}

To assess the quality of the generated data, we conducted two complementary evaluations. We first assess the statistical and temporal fidelity of the generated scenarios. We then adopt a comprehensive, class-wise evaluation that measures optimization- and network-relevant fidelity under downstream DC-OPF, capturing whether generated scenarios preserve the structures that matter for decision making and constraint behavior.

Figure \ref{fig:average} presents the comparison of original (blue) vs. generated (orange) data in 16 classes within a specific zone. The results show that the generated data effectively captures major peaks and valleys, demonstrating strong alignment with real-world patterns. However, minor discrepancies were observed in some high-generation regions, where the generated data occasionally overestimated peak values. {This tendency suggests that the model matches the overall distribution well, but may be slightly miscalibrated for extreme peaks. We therefore further assess whether such tail errors translate into meaningful degradation in downstream decision quality.}
\begin{figure*}[htbp]
    \centering
    \includegraphics[width=1\linewidth]{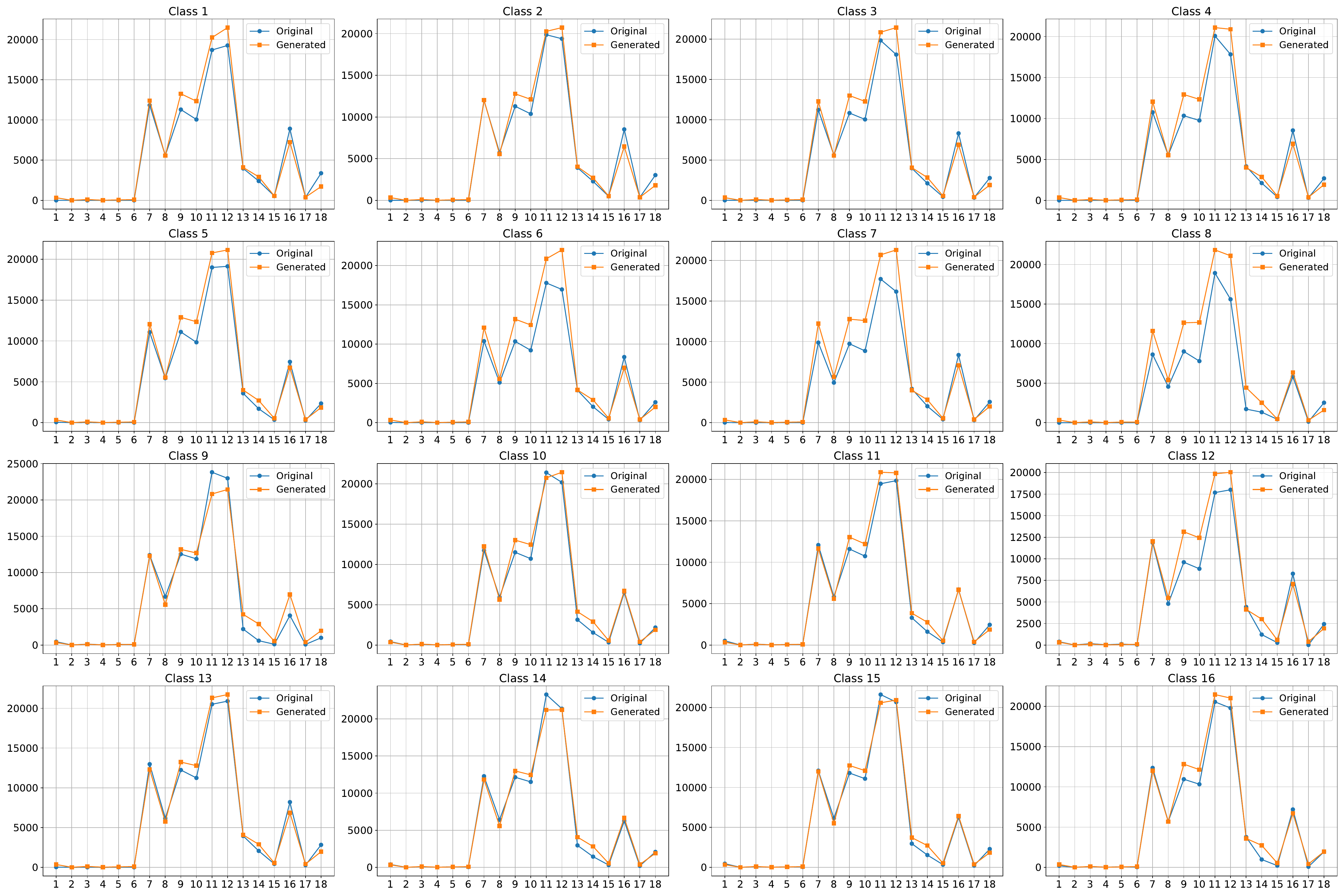}
    \caption{Comparison of averaged generation between original data and generated data in the 16 classes, we use the x-axis indices 1–6 for Solar, 7–12 for Load, and 13–18 for Wind.}
    \label{fig:average}
\end{figure*}
\paragraph{Ablation Study}
To disentangle the contributions of different training objectives in Diff2SP, we conduct an ablation study focused on the impact of the optimization-aware loss. Specifically, we aim to assess how incorporating optimization objectives into the generative process influences the quality of the generated scenarios and their effectiveness in downstream decision-making.

{
To assess the contribution of the optimization-aware training objective, we conduct an ablation study comparing three variants of the diffusion-based scenario generator. The vanilla diffusion model is trained using only the standard noise-prediction loss. The second variant adds the reconstruction loss to improve sample fidelity but omits the optimization-aware term. The full Diff2SP model includes all three loss components, thereby combining noise prediction, reconstruction accuracy, and downstream decision alignment in a single training objective.
}



Since we applied data normalization prior to training, we maintain consistency by setting the weights of all loss components $\lambda_{\text{noise}}, \lambda_{\text{recon}}, \lambda_{\text{opt}}$ to 1 in the Full Diff2SP configuration. This ensures that each loss term contributes equally, preventing any component from dominating the training process while preserving the model's ability to generate high-quality, optimization-aligned scenarios. {Besides, to evaluate whether the generated scenarios preserve the features that matter for stochastic power-system optimization, we compare the historical and generated scenario sets within each scenario class $c$ defined in Section \ref{sec:datasetup}. Let $\xi_c$ denote the historical uncertainty realizations in class $c$, and let $\hat{\xi}_c$ denote the corresponding generated realizations. Since scenario quality in SP cannot be assessed solely by visual or marginal distributional similarity, we evaluate each method along three complementary dimensions: its impact on the downstream optimization objective, its ability to reproduce the first- and second-order statistical structure of the uncertainty, and its ability to preserve constraint-relevant operating behavior in the DC-OPF model.

\noindent We first evaluate whether the generated scenarios preserve the optimization implications of the historical realizations within each scenario class. Let $\xi_c$ denote the collection of historical uncertainty realizations belonging to class $c$, and let $\hat{\xi}_c$ denote the corresponding generated realizations for the same class. Let $x^\star$ be the first-stage decision obtained by solving the SP problem using the historical realizations in class $c$. Holding $x^\star$ fixed, we compare its objective value under $\xi_c$ and $\hat{\xi}_c$. We define the normalized objective deviation as
\begin{equation}\label{err1}
\mathrm{err}^{\mathrm{obj}}_c
=
\frac{\big|Q(x^\star;\hat{\xi}_c)-Q(x^\star;\xi_c)\big|}
{\big|Q(x^\star;\hat{\xi}_c)\big|+\big|Q(x^\star;\xi_c)\big|}.
\end{equation}
This normalization places the two objective evaluations on a common scale and captures the relative discrepancy induced by replacing the historical scenario set with the generated one under the same fixed decision, avoiding distortions caused by differences in objective-value magnitudes.

We then evaluate distributional fidelity through the first two empirical moments of each scenario class. Let
$\mu_c := \frac{1}{|\xi_c|}\sum_{\omega\in\xi_c}\omega$ and
$\hat{\mu}_c := \frac{1}{|\hat{\xi}_c|}\sum_{\omega\in\hat{\xi}_c}\omega$
denote the empirical mean vectors of the historical and generated scenarios, respectively. The relative mean deviation is given by
\begin{equation}\label{err2}
\mathrm{err}^{\mu}_c=\frac{\|\hat{\mu}_c-\mu_c\|_2}{\|\mu_c\|_2}.
\end{equation}
Since each uncertainty realization contains wind, solar, and load information across all zones, this metric measures whether the generated scenarios reproduce the average operating conditions within each class.

Beyond the mean comparison, we also examine whether the generated scenarios preserve variability and dependence patterns.
Let $\Sigma_c = \frac{1}{|\xi_c|-1}\sum_{\omega\in\xi_c} (\omega-\mu_c)(\omega-\mu_c)^\top$ and $\hat{\Sigma}_c=\frac{1}{|\hat{\xi}_c|-1}\sum_{\omega\in\hat{\xi}_c}(\omega-\hat{\mu}_c)(\omega-\hat{\mu}_c)^\top$ denote the empirical covariance matrices of the historical and generated scenarios, respectively. We then define the covariance deviation as
\begin{equation}\label{err3}
\mathrm{err}^{\Sigma}_c
=
\frac{\|\hat{\Sigma}_c-\Sigma_c\|_F}{\|\Sigma_c\|_F}.
\end{equation}
The covariance deviation complements the mean deviation by measuring how well the generated scenarios preserve the second-order dependence structure, including temporal, cross-zone, and cross-variable correlations among wind, solar, and load.

Finally, we assess whether the generated scenarios preserve the operational stress imposed on transmission constraints. In the DC-OPF model, line-limit violations and the associated slack variables reflect constraint-relevant behavior that may not be captured by moment-based statistics alone. For a scenario set $\mathcal{S}$, define the average total line-limit violation as $\mathrm{LS}(\mathcal{S})
:=
\mathbb{E}_{s\in \mathcal{S}}
\left[
\sum_{(i,j)\in E}
\left(
[f_{ij}^s-F_{ij}]_+
+
[-F_{ij}-f_{ij}^s]_+
\right)
\right]$,
where $f_{ij}^s$ is the line flow on edge $(i,j)$ under scenario $s$, $F_{ij}$ is the corresponding line-flow limit, and $[z]_+ := \max\{z,0\}$. We then compare the line-slack behavior of the generated and historical scenarios through
\begin{equation}\label{err4}
\mathrm{err}^{\mathrm{line}}_c
=
\frac{\big|\mathrm{LS}(\hat{\xi}_c)-\mathrm{LS}(\xi_c)\big|}
{\mathrm{LS}(\xi_c)}.
\end{equation}
This metric measures whether the generated scenarios reproduce the severity of transmission constraint violations observed under the historical scenarios.
}

\begin{figure}[ht]
    \centering
    \includegraphics[width=0.93\linewidth]{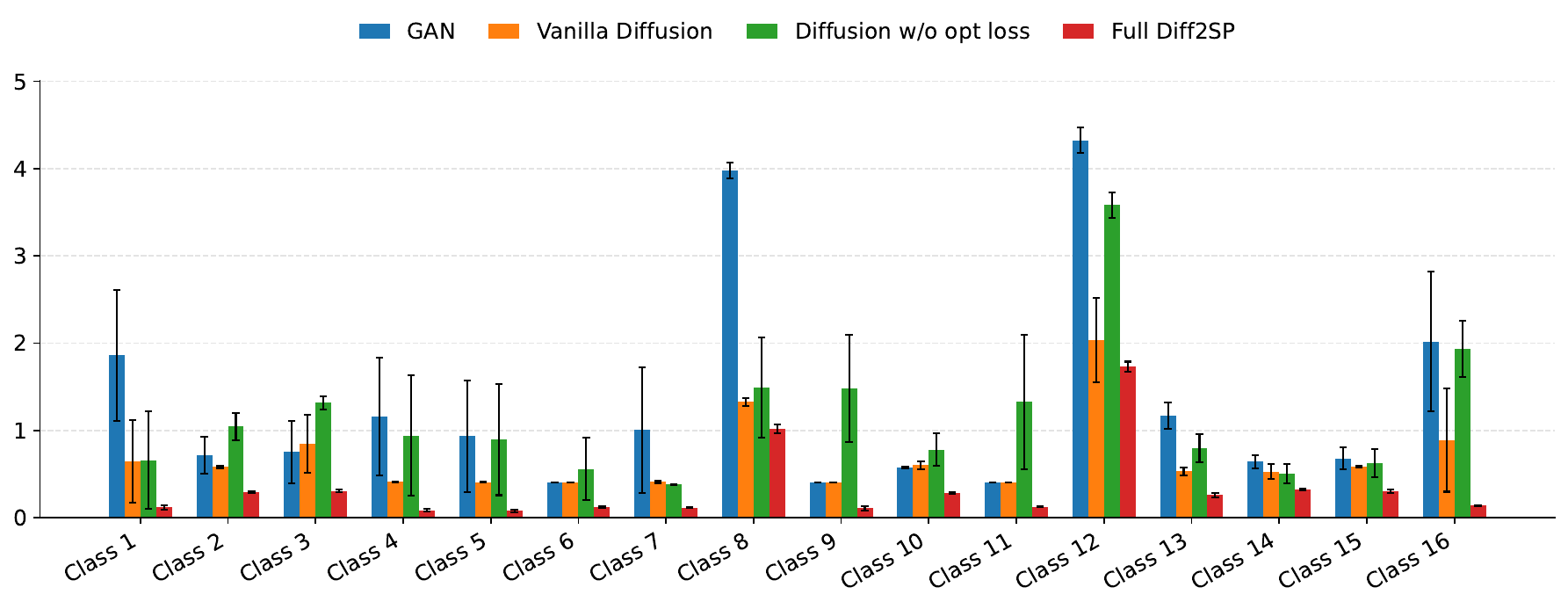}
    \caption{Performance Comparison of Different Diffusion Model Variants Across Scenario Classes. The y-axis reports the average of four error, as defined in (Eq. \eqref{err1} - Eq. \eqref{err4}).}
    \label{fig:res}
\end{figure}
Figure~\ref{fig:res} summarizes class-level fidelity using a composite score, where each bar reports the average of above four class-wise deviation metrics. To prevent the covariance and line-slack terms from dominating due to their larger dynamic range, we apply a logarithmic transform to these two components in the composite score. Full Diff2SP achieves the lowest composite error across all scenario classes, indicating the most reliable alignment between generated scenarios and downstream OPF behavior at the class level. The ablation results pinpoint the key driver, removing the optimization-guided loss leads to consistently larger errors across classes, showing that purely statistical objectives do not preserve decision-critical and constraint-relevant structure. In contrast, optimization-guided training enforces decision-aware scenario generation, yielding uniformly low deviations across all classes.

\section{Conclusion}
\label{sec_conclusion}
In this work, we proposed Diff2SP, a diffusion-based scenario generation framework for stochastic programming that learns high-fidelity uncertainty distributions without supervision. Unlike traditional Monte Carlo or GAN-based approaches, Diff2SP preserves both marginal statistics and cross-variable correlations by integrating self-attention mechanisms into the diffusion process. This structure enables the model to capture complex temporal and multivariate dependencies, leading to scenario sets that enhance decision quality in downstream optimization tasks. 

Empirical results show that Diff2SP consistently outperforms baseline methods in producing diverse, realistic, and structurally coherent scenarios. Its ability to maintain critical correlation patterns improves decision quality in applications such as energy scheduling and financial risk assessment. Moreover, by eliminating the need for labeled data or predefined scenario mappings, Diff2SP remains scalable and broadly applicable to high-dimensional uncertainty modeling.

Despite these advantages, Diff2SP currently relies on one-hot classification labels for conditional diffusion, limiting its capacity to model richer contextual information. Future work will explore multi-modal extensions, integrating textual descriptions and structured metadata to improve interpretability and generalization across diverse decision-making applications. 

\clearpage

%
%
%
\appendix

\section{Enhancing Fidelity with Transformers}
\label{subsec:Transformer}
To better capture long-range temporal and cross-variable dependencies in the reverse diffusion process, we replace the conventional CNN-based parameterization of the reverse mean function \( \mu_\theta(\cdot) \) with a Transformer-based architecture \citep{vaswani2017attention}. Concretely, at each diffusion step \( k \), the mean of the reverse Gaussian in \eqref{eq:reverse} is given by
\[
\mu_\theta(\xi^k, c, k) \;=\; \mathcal{T}_\theta(\xi^k, c, k),
\]
where \( \mathcal{T}_\theta : \mathbb{R}^{T \times D} \times \mathcal{C} \to \mathbb{R}^{T \times D} \) is a neural operator implemented by a Transformer network and \( T \) is the number of physical time periods in the time series.

To avoid confusion, we explicitly distinguish two notions of time here: the \emph{diffusion step} \( k = 1, \ldots, K \), which indexes the noise level in the denoising process, and the \emph{real-world time index} \( t = 1, \ldots, T \), which indexes the temporal resolution within each multivariate time series. At each diffusion step \( k \), the noisy sample
\[
\xi^k = 
\begin{bmatrix}
(\xi^k_1)^\top \\
\vdots \\
(\xi^k_T)^\top
\end{bmatrix}
\in \mathbb{R}^{T \times D}
\]
represents a multivariate time series of length \( T \), where each row
$
\xi^k_t \in \mathbb{R}^D
$
collects the values of the \( D \) uncertain quantities (such as wind and solar generation, and energy demand) at physical time \( t \). 

The Transformer operates on a sequence of vector-valued inputs, often referred to as \emph{tokens}. In our setting, each token corresponds to one time step \( t \) of the multivariate series. To map the raw inputs into a space where the Transformer can more easily learn abstract patterns, we first apply a linear embedding to each time step:
\begin{equation}
\label{eq:MLP}
Z_t^{(0)} = W_e \, \xi^k_t + b_e, 
\quad t = 1, \ldots, T,
\end{equation}
where \( W_e \in \mathbb{R}^{D \times d_{\text{model}}} \) and
\( b_e \in \mathbb{R}^{d_{\text{model}}} \) are trainable parameters, and
\( d_{\text{model}} \) is a user-defined hidden dimension that controls the representational capacity of the Transformer. The subscript \( e \) indicates
that this layer serves as an \emph{embedding} from the original feature space
into the Transformer hidden space. The superscript \( (0) \) denotes the input layer of the Transformer. Each vector \( Z_t^{(0)} \in \mathbb{R}^{d_{\text{model}}} \) is the embedded representation of the system state at time \( t \).

To encode temporal order and external context, we augment each embedded vector \( Z_t^{(0)} \) with two additional components. First, a \emph{positional encoding} \( PE_t \in \mathbb{R}^{d_{\text{model}}} \) encodes the time index (such as hour of the day). Second, a \emph{contextual encoding} \( CE(c) \in \mathbb{R}^{d_{\text{model}}} \) embeds observable exogenous covariates \( c \) such as weather forecasts or scenario descriptors (for example, high-renewable or peak-demand conditions). The final input token at time \( t \) is
\begin{equation}
Z_t = Z_t^{(0)} + PE_t + CE(c),
\quad t = 1, \ldots, T,
\end{equation}
so that the Transformer receives, for each time step, a vector that encodes its feature content, temporal position, and external context.

The core mechanism in the Transformer is \emph{self-attention}, which allows each token to aggregate information from all other tokens in the sequence. Intuitively, when updating the representation at time \( t \), the model can ``look at'' all time steps \( t' = 1, \ldots, T \) and decide how much each of them should contribute.
Specifically, for each token \( Z_t \), the Transformer computes three vectors via linear projections:
\[
Q_t = Z_t W_Q, \qquad
K_{t'} = Z_{t'} W_K, \qquad
V_{t'} = Z_{t'} W_V,
\]
where \( W_Q, W_K, W_V \in \mathbb{R}^{d_{\text{model}} \times d_h} \) are trainable matrices and \( d_h \) is the dimensionality of the attention subspace. The vectors \( Q_t \), \( K_{t'} \), and \( V_{t'} \) are referred to as the \emph{query}, \emph{key}, and \emph{value}, respectively.

The similarity between time \( t \) and \( t' \) is measured by the scaled dot product
\[
a_{tt'} = \frac{Q_t K_{t'}^\top}{\sqrt{d_k}},
\]
which is then normalized over \( t' \) using a softmax to obtain attention weights
\[
A_{tt'} =
\frac{\exp(a_{tt'})}
{\sum_{j=1}^{T} \exp(a_{tj})},
\quad t, t' = 1, \ldots, T.
\]
The updated representation at time \( t \) is a weighted average of all value vectors:
\begin{equation}
\label{eq:Attention}
\widehat{Z}_t = \sum_{t'=1}^{T} A_{tt'} V_{t'}, \quad t = 1, \ldots, T.
\end{equation}
Thus, \eqref{eq:Attention} allows each time index \( t \) to selectively aggregate information from all other times based on learned relevance weights \( A_{tt'} \), capturing long-range temporal and cross-variable dependencies.

In practice, the Transformer employs \emph{multi-head attention}. Instead of computing attention once in the full \( d_{\text{model}} \)-dimensional space, the model splits the representation into \( h \) subspaces of dimension \( d_h \), such that
$
d_{\text{model}} = h \, d_h.
$
Each head has its own triplet \( (W_Q^{(h)}, W_K^{(h)}, W_V^{(h)}) \) and computes an update of the form \eqref{eq:Attention} in its own subspace. The outputs from all heads are then concatenated and passed through a linear projection to form the final updated token representation. Stacking several such attention-plus-feedforward layers yields the overall operator \( \mathcal{T}_\theta \) that maps the noisy input sequence \( \xi^k \) (together with context \( c \)) to the denoised mean \( \mu_\theta(\xi^k, c, k) \).

By enabling direct information exchange across all time steps and variables at each denoising stage, the Transformer-based reverse model \( \mathcal{T}_\theta \) preserves long-range temporal structure and joint dependence in the generated scenarios. While this architectural enhancement improves structural fidelity, it does not by itself guarantee downstream decision relevance. In the next subsection, we introduce an optimization-guided training strategy that explicitly aligns scenario generation with SP objectives.

\section{Proof of Lemma \ref{lem:RegretBd}}
\label{sec:App_RegretBd}
\begin{proof}{Proof.} For notational convenience, we write $ \mathbb{E}_{P} $ and $ \mathbb{E}_{\mathcal{Q}_\theta} $ to denote $ \mathbb{E}_{\xi \sim\mathcal{P}} $ and $ \mathbb{E}_{\xi \sim \mathcal{Q}_\theta} $, respectively, throughout the paper.
Based on the regret definition $R(\theta)$ in \eqref{eq:regret}, we can add and subtract $ \mathbb{E}_{\xi \sim \mathcal{Q}_\theta}[f(\txth, \xi)] $ and $ \mathbb{E}_{\xi \sim \mathcal{Q}_\theta}[f( x^*, \xi)] $ to obtain:
	\begin{align}
	R(\theta) = & \ \underbrace{\left( \mathbb{E}_{P}[f(\txth, \xi)] - \mathbb{E}_{\mathcal{Q}_\theta}[f(\txth, \xi)] \right)}_{(A)} \nonumber\\
	& + \underbrace{\left( \mathbb{E}_{\mathcal{Q}_\theta}[f(\txth, \xi)] - \mathbb{E}_{\mathcal{Q}_\theta}[f( x^*, \xi)] \right)}_{(B)} \label{eq:Regret_expand}\\
	& + \underbrace{\left( \mathbb{E}_{\mathcal{Q}_\theta}[f( x^*, \xi)] - \mathbb{E}_{P}[f( x^*, \xi)] \right)}_{(C)}. \nonumber
	\end{align}
For Term (A), since $ f( x, \xi) $ is $ L $-Lipschitz in $ \xi $, we have
\begin{equation*}
\mathbb{E}_{P}[f(\txth, \xi)] - \mathbb{E}_{\mathcal{Q}_\theta}[f(\txth, \xi)] \le L \cdot W_1(P, \mathcal{Q}_\theta)
\end{equation*}
by the Kantorovich–Rubinstein formula \citep[Particular Case 5.16]{OptimalTransport}, which states that for any two probability measures $ \mu $ and $ \nu $ on a metric space $ (\Omega, d) $ with finite first moments, the 1-Wasserstein distance satisfies:
$
W_1(\mu, \nu)
= \sup_{\phi \in \mathrm{Lip}_1}
\left\{
\mathbb{E}_{\xi \sim \mu}[\phi( \xi)] - \mathbb{E}_{\xi \sim \nu}[\phi( \xi)]
\right\},
$
where $ \mathrm{Lip}_1 $ denotes the set of all 1-Lipschitz functions $ \phi : \Omega \to \mathbb{R} $.
For Term (C), the same upper bound can be obtained using the same argument. For Term (B), the optimality of $\txth$ under the distribution $\mathcal{Q}_\theta$ implies that $\mathbb{E}_{\mathcal{Q}_\theta}[f(\txth, \xi)] - \mathbb{E}_{\mathcal{Q}_\theta}[f( x^*, \xi)] \leq 0$. Hence, we have that
$R(\theta) \le  L \cdot W_1(P, \mathcal{Q}_\theta) + 0 + L \cdot W_1(P, \mathcal{Q}_\theta) = 2L \cdot W_1(P, \mathcal{Q}_\theta).$ 
\end{proof}

\section{Proof of Theorem \ref{thm:diff2sp-bound}}
\label{app:thm2}
\begin{proof}{Proof.}
We first decompose the error into the estimation error and the approximation error.
\[
R(\hat{\theta})
= \underbrace{L(\hat{\theta}) - L(\theta^\star)}_{\text{estimation}}
+
\underbrace{L(\theta^\star) - L(x^*)}_{\text{approximation}}.
\]
The approximation term is nonnegative and vanishes if $x^* \in \{x_\theta:\theta\in\Theta\}$.


A standard uniform deviation bound (symmetrization + McDiarmid) then yields,  with probability at least $1-\delta/2$,

\begin{equation*}
\begin{aligned}
\sup_{\theta\in \Theta}\big|L(x_{\theta})-\widehat{L}_N(x_{\theta})\big|
&\le 2\mathcal{R}_N + LD\sqrt{\tfrac{\log(2/\delta)}{2N}}
\\&\le \frac{2C}{\sqrt{N}} + LD\sqrt{\tfrac{\log(2/\delta)}{2N}}.
\end{aligned}
\end{equation*}

Since $\hat{\theta}$ minimizes $\hat{L}_N$, we have
\begin{equation*}
        L(\hat{\theta}) - L(\theta^\star)
\le \underbrace{\left(L(\hat{\theta}) - \hat{L}_N(\hat{\theta})\right)}_{\le \sup |L-\hat L_N|}
+ \underbrace{\left(\hat{L}_N(\hat{\theta}) - \hat{L}_N(\theta^\star)\right)}_{\le 0}+ \underbrace{\left(\hat{L}_N(\theta^\star) - L(\theta^\star)\right)}_{\le \sup |L-\hat L_N|},
\end{equation*}
so that
\begin{equation*}
    \begin{aligned}
L(\hat{\theta}) - L(\theta^\star)
&\le 2 \sup_{\theta}\big|L(\theta)-\hat{L}_N(\theta)\big|\\
&\le \frac{4C}{\sqrt{N}} + 2LD\sqrt{\tfrac{\log(2/\delta)}{2N}}.
    \end{aligned}
\end{equation*}

With probability at least $1-\delta$,
\[
R(\hat{\theta})
\le
\frac{4C}{\sqrt{N}}
+ 2LD\sqrt{\tfrac{\log(2/\delta)}{2N}}
+ \big(L(\theta^\star)-L(x^*)\big).
\]

To require the estimation error to be bounded by $\varepsilon$ is to have 
$$\frac{4C}{\sqrt{N}} + 2LD\sqrt{\tfrac{\log(2/\delta)}{2N}} \le \varepsilon,$$
which implies that 
$$\frac{1}{\sqrt{N}} \left[ 4C + LD \cdot \sqrt{2\log\left(\frac{2}{\delta}\right)} \right] \le \varepsilon.$$

Thus, for realizable models ($L(\theta^\star)=L(x^*)$) ,
it suffices to take
$$N \ge \frac{1}{\varepsilon^2} \left[ 4C + LD\sqrt{2\log\left(\frac{2}{\delta}\right)} \right]^2.$$
for an constant $C$, which yields
$\Pr\big[R(\hat{\theta})\le \varepsilon\big]\ge 1-\delta$.
\end{proof}
\begin{remark}[Why the proof does not extend to GANs]
The key step in the proof of Theorem~\ref{thm:diff2sp-bound} is Step~2, 
where we control the deviation 
$\sup_{\theta \in \Theta} |L(x_\theta) - \hat L_N(x_\theta)|$ 
via Rademacher complexity of the hypothesis class 
$\mathcal{F}=\{f(x_\theta,\cdot):\theta\in\Theta\}$. 
This works in Diff2SP because the learning objective is an 
empirical risk functional $\widehat L_N(x_\theta)$: 
the induced decision $x_\theta$ is well-defined, and its true performance 
$L(x_\theta)$ can be related to the empirical counterpart through standard uniform convergence arguments. 

In contrast, a GAN generator does not minimize $\widehat L_N(x_\theta)$ directly. 
Instead, it is trained through an adversarial min–max game against a discriminator, 
optimizing a statistical divergence such as the Jensen–Shannon divergence. 
Thus, the generator’s objective cannot be written as a uniform empirical average 
of the form $\hat L_N(x_\theta)$, and Step~2 of the above proof does not apply. 
What can be bounded for GANs is the discriminator’s generalization error 
(see Arora et al.\ 2017), which controls divergence between $\mathcal{Q}_\theta$ and $\mathcal{P}$, 
but only indirectly relates to the decision risk $L(x_\theta)$. 
This translation from divergence bounds to regret bounds yields weaker rates, 
typically of order $O(N^{-1/4})$ rather than the $O(N^{-1/2})$ rate obtained for Diff2SP.
\end{remark}

\section{Proof of Theorem~\ref{thm:gan-bound}}
\label{app:thm3}
\begin{proof}{Proof}
Let $\hat{P}$ denote the empirical distribution formed by $N$ i.i.d.\ samples from the true distribution $\mathcal{P}$. The approximation error of the generator class is defined as
\[
\varepsilon_0 := \inf_{\theta \in \Theta} \mathrm{JS}(\hat{P}, \mathcal{Q}_\theta),
\]
where $\mathrm{JS}$ denotes the Jensen--Shannon divergence. Suppose the generator is trained to $\epsilon_{\mathrm{opt}}$-optimality with respect to the empirical objective, that is,
\[
\mathrm{JS}(\hat{P}, Q_{\hat{\theta}}) \le \varepsilon_0 + \epsilon_{\mathrm{opt}}.
\]

By the generalization theorem of \citet[Theorem~3.1]{arora2017generalization}, for a discriminator class of VC dimension $d$, the Jensen--Shannon divergence between the empirical and population distributions satisfies
\[
\big| \mathrm{JS}(P, Q_{\hat{\theta}}) - \mathrm{JS}(\hat{P}, Q_{\hat{\theta}}) \big|
\le C_1 \sqrt{\frac{d \log(A / \varepsilon)}{N}},
\]
with probability at least $1 - \delta/2$, where $A=L_{\mathcal{D}}L_{\phi}p/\delta$, with $L_{\mathcal{D}}$ denoting the
Lipschitz constant of the discriminator class with respect to its parameters,
$L_{\phi}$ denoting the Lipschitz constant of the measuring function, and $p$
denoting the number of discriminator parameters.

Combining the empirical optimality condition with this generalization result yields
\[
\mathrm{JS}(P, Q_{\hat{\theta}})
\le \varepsilon_0 + \epsilon_{\mathrm{opt}} + C_1 \sqrt{\frac{d \log(A / \varepsilon)}{N}}.
\]

On a compact support $\Xi$ with diameter $B$, the total variation distance satisfies
$W_1(P, Q) \le B \, \mathrm{TV}(P, Q)$
\citep[Particular Case 6.16]{OptimalTransport},
and by Pinsker’s inequality
$\mathrm{TV}(P, Q) \le \sqrt{2\,\mathrm{JS}(P, Q)}$
\citep[Eq.~(1)]{sason2016fdivergence}. Hence
\[
W_1(P, Q_{\hat{\theta}}) \le B \sqrt{2\,\mathrm{JS}(P, Q_{\hat{\theta}})}.
\]
Using Lemma~\ref{lem:RegretBd}, which gives $R(\theta) \le 2L\, W_1(P, \mathcal{Q}_\theta)$,
we obtain
\[
R(\hat{\theta}) \le 2 L B \sqrt{2\!\left(\varepsilon_0 + \epsilon_{\mathrm{opt}}
+ C_1 \sqrt{\frac{d \log(A / \varepsilon)}{N}}\right)}.
\]

To guarantee $R(\hat{\theta}) \le \varepsilon$, it suffices that
\[
2LB \sqrt{2\!\left(\varepsilon_0 + \epsilon_{\mathrm{opt}}
+ C_1 \sqrt{\frac{d \log(A / \varepsilon)}{N}}\right)} \le \varepsilon.
\]
Squaring both sides and rearranging terms gives
\[
8L^2 B^2 \!\left(\varepsilon_0 + \epsilon_{\mathrm{opt}}
+ C_1 \sqrt{\frac{d \log(A / \varepsilon)}{N}}\right)
\le \varepsilon^2,
\]
or equivalently,
\[
\varepsilon_0 + \epsilon_{\mathrm{opt}}
+ C_1 \sqrt{\frac{d \log(A / \varepsilon)}{N}}
\le \frac{\varepsilon^2}{8L^2 B^2}.
\]
Assuming that the empirical optimization is controlled to
$\epsilon_{\mathrm{opt}} \le \varepsilon^2 / (16L^2 B^2)$, we obtain
\[
\varepsilon_0 + C_1 \sqrt{\frac{d \log(A / \varepsilon)}{N}}
\le \frac{\varepsilon^2}{16L^2 B^2}.
\]
This condition requires $\varepsilon_0 < \varepsilon^2 / (16L^2 B^2)$; otherwise,
no sample size can achieve the desired regret bound.
Solving for $N$ gives
\[
C_1 \sqrt{\frac{d \log(A / \varepsilon)}{N}}
\le \frac{\varepsilon^2}{16L^2 B^2} - \varepsilon_0,
\qquad
N \ge
\frac{C_1^2 d \log(A / \varepsilon)}{\left(\frac{\varepsilon^2}{16L^2 B^2} - \varepsilon_0\right)^2}.
\]

Finally, applying standard VC generalization arguments introduces an additional
$\log(1/\delta)$ factor inside the logarithm, which can be absorbed into $A$.
Thus, with probability at least $1-\delta$,
\[
N \ge
\frac{C_1^2 d\,\log\!\big(3 D_\Xi / (\varepsilon \delta)\big)}
{\left(\frac{\varepsilon^2}{16L^2 B^2} - \varepsilon_0\right)^2},
\]
which establishes the claimed sample–complexity dependence. 
\end{proof}




\bibliographystyle{plainnat} 
\bibliography{sample} 





\end{document}